\newtheorem{theorem}{Theorem}[section]
\newtheorem{proposition}[theorem]{Proposition}
\newtheorem{conjecture}[theorem]{Conjecture}
\newtheorem{corollary}[theorem]{Corollary}
\newtheorem{lemma}[theorem]{Lemma}
\newtheorem{example}[theorem]{Example}
\theoremstyle{definition}
\newtheorem{definition}[theorem]{Definition}
\theoremstyle{remark}
\newtheorem{remark}[theorem]{Remark}
\begin{document}
\pagenumbering{arabic}
\title{Asymptotic Linear Programming Lower Bounds for the Energy of
Minimizing Riesz and Gauss Configurations }
\author{D. P. Hardin, T. J. Michaels, and E.B. Saff\footnote{ The research of the authors was supported, in part, by National Science Foundation grant DMS-1516400. The research of T. Michaels was completed as part of his Ph.D. dissertation at Vanderbilt University. Research for this article was conducted while two of the authors were in residence at the Institute for Computational and Experimental Research in Mathematics in Providence, RI, during the ``Point Configurations in Geometry, Physics and Computer Science" program supported by the National Science Foundation under Grant No. DMS-1439786.}}

\maketitle
\begin{abstract}
Utilizing frameworks developed by Delsarte, Yudin and Levenshtein, we deduce linear programming lower bounds (as $N\to \infty$) for
the Riesz energy of $N$-point configurations on the $d$-dimensional unit sphere in the so-called hypersingular case; i.e, for non-integrable
Riesz kernels of the form $|x-y|^{-s}$ with $s>d.$ As a consequence, we immediately get  (thanks to the Poppy-seed bagel theorem) lower estimates for the large $N$ limits of minimal hypersingular Riesz energy on compact $d$-rectifiable sets. Furthermore, for the Gaussian potential $\exp(-\alpha|x-y|^2)$ on $\mathbb{R}^p,$ we obtain
lower bounds for the energy of infinite configurations having a prescribed density.
\end{abstract}

\section{Introduction}

Minimal energy configurations have wide ranging applications in various scientific fields such as cryptography, crystallography, viral morphology, as well as in finite element modeling, radial basis functions, and Quasi-Monte-Carlo methods for graphics applications. For a fixed dimension and cardinality, the use of the Delsarte-Yudin linear programming bounds and Levenshtein $1/N$-quadrature rules are known to provide bounds on the minimal energy and prove universal optimality of some configurations on the sphere $\mathbb{S}^d$ (see for example \cite{CohnKumar}). The goal of this paper is to adapt these techniques to provide lower bounds on minimal energy for configurations in two different but related contexts.  The first is for the large $N$ limit of Riesz energy of $N$-point configurations  on a compact $d$-rectifiable set embedded in $\mathbb{R}^p$, while the second is for the Gaussian energy of infinite configurations in $\mathbb{R}^p$ having a prescribed density. The latter provides an alternative method for obtaining
a main result of Cohn and de Courcy-Ireland \cite{CohnIre}.

For our results on Riesz potentials we need the following definitions and notations. We say a set $A\subset \mathbb{R}^p$ is \textit{$d$-rectifiable} if it is the image of a bounded set in $\mathbb{R}^d$ under a Lipschitz mapping. For a $d$-rectifiable, closed set $A$ and a lower semicontinuous, symmetric kernel $K: A\times A\to \ (-\infty,\infty]$, the $K$-energy of a configuration $\omega_N = \left\{x_1,\ldots,x_N\right\}\subset A$ of $N$ (not necessarily distinct) points is given by
\[E_K(\omega_N):= \sum_{i\neq j} K(x_i,x_j).\]
A commonly arising problem is to minimize the $K$-energy for a fixed number of points and describe the optimal configurations; i.e., to determine
\[\mathcal{E}_K(A,N):= \inf_{\omega_N\subset A} E_K(\omega_N).\]\

For point configurations on compact sets we will primarily focus on the Riesz $s$-kernels
$$K_s(x,y):= |x-y|^{-s}\,\, \textrm{for}\,\, s>d=\dim(A);$$
that is, in the \emph{hypersingular case}, which is intimately related to the best-packing problem. We remark that for such hypersingular
kernels, the \textit{continuous $s$-energy} of $A$
\[ \mathcal{I}_s[\mu]:=\int_{A} \int_{A} K_s(x,y) d\mu(x)d\mu(y)\]
is infinite for every probability measure $\mu$ supported on $A,$ and so the standard methods of potential theory for obtaining large $N$ limits
of minimizing point configurations do not apply.\

 For brevity we hereafter set
$$ E_s(\omega_N):=E_{K_s}(\omega_N),\qquad \mathcal{E}_s(A,N):=\mathcal{E}_{K_s}(A,N).
$$ Furthermore, if $A$ is the unit sphere $\mathbb{S}^d \subset \mathbb{R}^{d+1}$ and
 $K(x,y)$ is a kernel on $ \mathbb{S}^d \times \mathbb{S}^d$ of the form $K(x,y)=h(\langle x, y\rangle)$ for some function $h$ on $[-1,1]$, we write
$$ E_h(\omega_N)=E_K(\omega_N),\qquad \mathcal{E}_h(\mathbb{S}^d,N)=\mathcal{E}_K(\mathbb{S}^d,N).
$$
In particular,
$$K_s(x,y)=h_s(\langle x, y\rangle):=(2-2\langle x, y\rangle)^{-s}.$$\

For fixed cardinalities $N$ and kernels of the form $K(x,y)=h(\langle x, y\rangle)$, a general framework for obtaining lower bounds for minimal energy configurations on the  unit sphere  was developed by Yudin \cite{Yudin} based on a method of Delsarte, Goethals, and Seidel \cite{DGS} for spherical designs. This linear programming technique involves maximizing a certain functional defined over a constrained class of functions $f$ that satisfy $f(t)\leq h(t)$ for $t\in [-1,1].$ Combining Yudin's approach with Levenshtein's work \cite{LevBig},\cite{LevPacking} on maximal spherical codes, Boyvalenkov et al \cite{Petersquared} derived lower bounds for discrete energy that are `universal' in the sense that they hold whenever the potential function $h(t)$ is \emph{absolutely monotone} on $[-1,1);$ that is, when $h^{(k)}(t)$ exists and is non-negative for $t \in [-1,1)$ for all $k \geq 0,$ and $h(1):=\lim_{t\to 1^{-}}h(t)$, which may be $+\infty.$

In the present paper, we use this framework to derive asymptotic lower bounds as $N \to \infty$ for $\mathcal{E}_s(\mathbb{S}^d,N)$ in the case $s>d.$ These results for the sphere, in turn, have application to
the broader class of energy problems on $d$-rectifiable sets. Indeed, this is a  consequence of the localized nature of the potentials $h_s$ as expressed in the following result, which is known as the \emph{Poppy-seed bagel theorem}.

\begin{theorem}[\cite{CSD}, \cite{borharsaf}]

For any $d$-rectifiable closed set $A\subset \mathbb{R}^p$ and any $s>d$, there exists a positive, finite constant $C_{s,d}$, independent of $A$ such that

\begin{equation}\label{pop}
\lim_{N\to\infty}\frac{\mathcal{E}_s(A,N)}{N^{1+s/d}} = \frac{C_{s,d}}{\mathcal{H}_d(A)^{s/d}}.
\end{equation}\label{poppy seed}
\noindent Furthermore, any sequence of $N$-point $s$-energy minimizing configurations is asymptotically uniformly
distributed with respect to $d$-dimensional Hausdorff measure restricted to $A$.
\end{theorem}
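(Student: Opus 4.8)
\emph{Proof proposal.} The plan is to first establish \eqref{pop} for the unit cube $Q:=[0,1]^d$ — thereby defining $C_{s,d}$ — and then to propagate it to all $d$-rectifiable sets via a scaling relation together with a measure-like additivity of the limiting normalized energy, with the equidistribution statement falling out of the same argument. For compact $A$ write $\underline g_s(A):=\liminf_{N\to\infty}\mathcal E_s(A,N)/N^{1+s/d}$ and $\overline g_s(A):=\limsup_{N\to\infty}\mathcal E_s(A,N)/N^{1+s/d}$. Placing $N\approx m^d$ points on the grid $(m^{-1}\mathbb Z^d)\cap Q$ bounds the energy above by $N\,m^{s}\sum_{k\in\mathbb Z^d\setminus\{0\}}|k|^{-s}$, a convergent series precisely because $s>d$, so $\overline g_s(Q)<\infty$; a pigeonhole/packing estimate shows that any $N$ points in $Q$ contain $\gtrsim N$ points each within distance $\lesssim N^{-1/d}$ of a neighbor, giving $\underline g_s(Q)>0$. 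For existence of the limit one subdivides $Q$ into $m^d$ congruent subcubes, shrinks each slightly so consecutive subcubes are separated, and places a near-optimal $n$-point configuration in each: the self-energy scales like $m^{d}\cdot m^{s}\,\mathcal E_s(Q,n)$ while the interaction energy between distinct subcubes is $o((m^dn)^{1+s/d})$ — again because $s>d$ — so a Fekete-type (subadditivity) argument yields $\mathcal E_s(Q,N)/N^{1+s/d}\to C_{s,d}\in(0,\infty)$.

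\emph{Scaling and additivity.} Homogeneity of $K_s$ gives $\mathcal E_s(\lambda A,N)=\lambda^{-s}\mathcal E_s(A,N)$, hence $g_s(\lambda A)=\lambda^{-s}g_s(A)$ wherever the limit $g_s(A)$ exists; for a cube of side $\ell$ this reads $g_s=C_{s,d}\,\ell^{-s}=C_{s,d}/\mathcal H_d^{s/d}$, which is \eqref{pop} for cubes. If $A=A_1\cup A_2$ with $A_1,A_2$ compact and $\operatorname{dist}(A_1,A_2)>0$, then allocating $N_1$ points to $A_1$ and $N-N_1$ to $A_2$, optimizing the split, and discarding the cross energy (which is $O(N^2)=o(N^{1+s/d})$) gives
\[ g_s(A_1\cup A_2)^{-d/s}=g_s(A_1)^{-d/s}+g_s(A_2)^{-d/s}. \]
Thus $A\mapsto g_s(A)^{-d/s}$ is additive over well-separated pieces; on a finite union of essentially disjoint cubes it therefore equals $C_{s,d}^{-d/s}\,\mathcal H_d$.

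\emph{Extension to $d$-rectifiable sets.} For a bounded $\mathcal H_d$-measurable $A\subseteq\mathbb R^d$, sandwich $A$ between finite unions of dyadic cubes from inside and outside; monotonicity of $\mathcal E_s$ in the domain together with the additivity above shows that $g_s(A)^{-d/s}$ extends to a measure absolutely continuous with respect to $\mathcal H_d$ with constant density $C_{s,d}^{-d/s}$, proving \eqref{pop} for flat sets. For a bi-Lipschitz image $A=\phi(E)$ with $E\subseteq\mathbb R^d$, localize so that $\phi$ is within $\varepsilon$ of an affine map on each small piece, push the cube estimates through these affine maps (whose only effect on the energy asymptotics is the Jacobian factor), and integrate; since $\mathcal H_d(\phi(E))=\int_E J\phi\,d\mathcal H_d$, the factors recombine to give exactly $C_{s,d}/\mathcal H_d(A)^{s/d}$. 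A general closed $d$-rectifiable set splits, modulo an $\mathcal H_d$-null set, into countably many such bi-Lipschitz pieces; an $\mathcal H_d$-null set and the tail of the decomposition each hold only $o(N)$ points of a near-optimal configuration (once more by $s>d$), so the full estimate follows from the allocation optimization used above.

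\emph{Equidistribution and the main difficulty.} For any closed $B\subseteq A$ with $\mathcal H_d(B)>0$, apply the allocation argument to $A=B\cup\overline{A\setminus B}$: a sequence of energy minimizers must place on $B$ a fraction of points whose $\liminf$ is $\ge\mathcal H_d(B)/\mathcal H_d(A)$ (otherwise reallocating strictly lowers the energy) and, symmetrically, whose $\limsup$ is $\le$ that value, so the normalized counting measures converge weak-$*$ to $\mathcal H_d|_A/\mathcal H_d(A)$. I expect the crux to be essentially measure-theoretic: verifying that interactions between well-separated or low-measure pieces are genuinely of lower order than $N^{1+s/d}$ — the one place where $s>d$ is indispensable — and carrying the cube asymptotics through bi-Lipschitz charts uniformly enough to see that the limiting constant is the \emph{same} $C_{s,d}$ for every $A$; obtaining the strict lower bound $C_{s,d}>0$ is the other genuinely non-formal ingredient.
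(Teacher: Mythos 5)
A preliminary remark: this paper does not prove the Poppy-seed bagel theorem at all --- it is quoted from \cite{CSD} and \cite{borharsaf} and used as a black box to convert the spherical estimates into bounds on $C_{s,d}$. So the comparison is with the proofs in those references. Your sketch does follow the same overall architecture as the published proof: establish the limit for the cube by self-similar subdivision (cross terms being $O(N^2)=o(N^{1+s/d})$, which is where $s>d$ enters), define $C_{s,d}$ there, prove the harmonic-type additivity of $g_s(\cdot)^{-d/s}$ for positively separated compacta, pass to flat compact sets by cube sandwiching and monotonicity, then to rectifiable sets through bi-Lipschitz pieces, and extract equidistribution from the allocation optimization. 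At that level the outline is faithful.

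Two steps you treat as routine are, however, exactly where the real proof must work hardest, and as written they are gaps. First, the assertion that an $\mathcal{H}_d$-null set, the tail of the decomposition, or any low-measure piece carries only $o(N)$ points of a near-minimizer does not follow from $s>d$ alone: it needs a quantitative lower bound on $\mathcal{E}_s(B,N)$ valid for \emph{every} compact $B\subset A$, and what one can prove at that generality is a bound in terms of the upper Minkowski content of $B$, not its Hausdorff measure; the hypothesis that $A$ is closed and $d$-rectifiable is consumed precisely in identifying the two (Federer). The same issue affects your equidistribution argument, since $B$ and $\overline{A\setminus B}$ are not positively separated and their limits are not known to exist a priori, so the additivity must be run as separate liminf/limsup inequalities for sets whose relative boundary is $\mathcal{H}_d$-null, not as the clean identity you invoke. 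Second, in the chart step, Rademacher-type pointwise affine approximation does not give the uniform pairwise comparison $|\phi(x)-\phi(y)|=(1+O(\varepsilon))\,|L(x-y)|$ that your ``push through the Jacobian'' argument requires; the cited proofs instead use the structure theorem decomposing a rectifiable set, modulo an $\mathcal{H}_d$-null set, into countably many $(1+\varepsilon)$-bi-Lipschitz images of compact subsets of $\mathbb{R}^d$, so that energy and measure are each distorted only by powers of $1+\varepsilon$. With those ingredients supplied (plus the minor repairs that essentially-disjoint cubes must be shrunk before applying the separated-sets lemma, and that the cube limit needs an interpolation in $N$ beyond the special subsequence), your outline matches the known proof.
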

In \eqref{pop}, $\mathcal{H}_d(A)$ denotes the $d$-dimensional Hausdorff measure of $A$ with the normalization
that the $d$-dimensional unit cube embedded in $\mathbb{R}^p$ has measure 1.\

In dimension $d=1$, it is known \cite{CS1} that $C_{s,1} = 2\zeta(s)$,  but for all other dimensions the exact values of $C_{s,d}$ have not as yet been proven. However, the following relation between $C_{s,d}$ and the optimal packing density in $\mathbb{R}^d$ was established in \cite{sinfty}:
\begin{equation}
\lim_{s\to\infty} [C_{s,d}]^{1/s} = \frac{1}{C_{\infty,d}},\ \ \ \ \ \ \ \ \ \ \ \ C_{\infty,d}:= 2\bigg[\frac{\Delta_d}{\mathcal{H}_d(\mathbb{B}^d)}\bigg]^{1/d},
\label{eq.Csinfty}
\end{equation}
where $\Delta_d$ is the largest sphere packing density in $\mathbb{R}^d$. The only dimensions for which $\Delta_d$ is known at present are $d=1,2,3$ and, more recently, $d=8$ and $d=24$ (see \cite{Via8} and \cite{Via24}). In these special dimensions, $\Delta_d$ is attained by lattice packings, which is not expected to be the case for general dimensions.

Clearly, any sequence of configurations on a set $A$ provides an upper bound for $C_{s,d}$. Furthermore, it is straightforward (see, for example, \cite{BHS12}, Proposition 1) to establish that

\begin{equation}
    C_{s,d}\leq \min_{\Lambda\subset\mathbb{R}^d}|\Lambda|^{s/d}\zeta_\Lambda(s),
\label{eq.lattice bound}
\end{equation}
where the minimum is taken over all lattices $\Lambda\subset\mathbb{R}^d$ with covolume $|\Lambda|>$0 and
\begin{equation}
    \zeta_\Lambda(s):=\sum_{0\neq x\in\Lambda} |x|^{-s}
\label{eq.epszeta}
\end{equation}
is the Epstein zeta function for the lattice. Regarding equality, the following conjecture is well known \cite{BHS12}, \cite{CohnKumar}:

\begin{conjecture}
For $d = 2,4,8,$ and $24$,
\[C_{s,d} = \widetilde{C}_{s,d}:=|\Lambda_d|^{\frac{s}{d}}\zeta_{\Lambda_d}(s),\qquad s>d,\]
where $\Lambda_2$ is the equi-triangular lattice, $\Lambda_4$ the $D_4$ lattice, $\Lambda_8$ the $E_8$ lattice, and $\Lambda_{24}$ the Leech lattice.
\label{Csdconj}
\end{conjecture}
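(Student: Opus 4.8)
The conjectured identity $C_{s,d}=\widetilde C_{s,d}$ for $d\in\{2,4,8,24\}$, $s>d$, splits into two inequalities. The upper bound $C_{s,d}\le\widetilde C_{s,d}$ needs nothing new: it is \eqref{eq.lattice bound} evaluated at $\Lambda=\Lambda_d$, since $\widetilde C_{s,d}=|\Lambda_d|^{s/d}\zeta_{\Lambda_d}(s)$. So the whole content is the matching lower bound $C_{s,d}\ge\widetilde C_{s,d}$, and the plan is to extract it by driving the Delsarte--Yudin--Levenshtein bounds for $\mathcal{E}_s(\mathbb{S}^d,N)$ to their $N\to\infty$ asymptotics and reading off the constant through the limit relation \eqref{pop} (the Poppy-seed bagel theorem).

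The mechanism I would use is as follows. For each $N$ choose a function $f_N$ on $[-1,1]$ admissible for the Delsarte--Yudin program for $h_s$ (that is, $f_N\le h_s$ on $[-1,1)$ with non-negative Gegenbauer coefficients for $\mathbb{S}^d$), concentrated near $t=1$ at the scale $1-t\sim N^{-2/d}$ forced by the nearest-neighbour spacing $N^{-1/d}$, and apply the Delsarte--Yudin inequality --- or rather its Levenshtein $1/N$-quadrature refinement, which is what correctly absorbs the diagonal term $f_N(1)$ --- to obtain a lower bound for $\mathcal{E}_s(\mathbb{S}^d,N)$. After rescaling distances by $N^{1/d}$, the leading part of this bound is a Riemann-type approximation of a Euclidean functional, and the rescaled $f_N$ should converge to a function $g\colon\mathbb{R}^d\to\mathbb{R}$ with $g(x)\le|x|^{-s}$ for $x\ne 0$ and $\widehat g\ge 0$; feeding this into \eqref{pop} turns $C_{s,d}\ge\widetilde C_{s,d}$ into the assertion that the Euclidean linear program
\[
\sup\bigl\{\,\widehat g(0)\ :\ g\le|\cdot|^{-s}\ \text{on}\ \mathbb{R}^d\setminus\{0\},\ \ \widehat g\ge 0\,\bigr\}
\]
(normalized to match the density convention in \eqref{eq.epszeta}) attains the value $\widetilde C_{s,d}$ in the four special dimensions. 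This is exactly the Cohn--Elkies-type program, here in its energy form due to Cohn and Kumar.

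It then remains to solve that Euclidean program sharply for $d=2,4,8,24$. In dimensions $8$ and $24$ the optimal test functions are the Fourier-interpolation (``magic'') functions built from modular and quasimodular forms that drive the sphere-packing solutions of \cite{Via8} and \cite{Via24}; their energy refinement --- the interpolation basis of Cohn, Kumar, Miller, Radchenko, and Viazovska --- shows that $E_8$ and the Leech lattice minimize $\sum_{0\ne v\in\Lambda}f(|v|^2)$ over all periodic configurations of a fixed density for every completely monotone $f$, in particular for $f(r)=r^{-s/2}$. Granting the (plausible, and known in several formulations) identification of $C_{s,d}$ with the minimal normalized Riesz energy over periodic configurations of $\mathbb{R}^d$, this delivers $C_{s,d}\ge\widetilde C_{s,d}$, hence equality, for $d=8,24$. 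For $d=2$ one would instead need the sharp function certifying universal optimality of the equi-triangular lattice --- the two-dimensional analogue of the Cohn--Kumar conjecture \cite{CohnKumar} --- and for $d=4$ the analogous statement for $D_4$; no such construction is currently available.

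I expect two obstacles. The first is technical but real: making the limit transfer rigorous. The Levenshtein $1/N$-quadrature nodes accumulate at $t=1$ at precisely the scale $N^{-2/d}$, and one must show that the resulting quadrature/linear-programming bound converges to the continuum Euclidean functional \emph{and to its sharp value}, not merely to something strictly below it --- which is exactly where the asymptotic bounds of the present paper fall short for general $s$ (they do recover the optimal constant as $s\to\infty$, reproducing \eqref{eq.Csinfty}, but closing the gap for all finite $s>d$ appears to require building the interpolation machinery directly on $\mathbb{S}^d$ and controlling it uniformly in $N$). The second is more fundamental: for $d=2$ and $d=4$ the sharp Euclidean magic functions do not exist in the literature, so through this route the best one can reach is near-optimality rather than the full conjecture.
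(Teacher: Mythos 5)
The statement you were asked about is Conjecture~\ref{Csdconj}: the paper does not prove it, and states it precisely because it is open, so there is no proof of it in the paper to compare against --- and your proposal, by your own admission, is not a proof either. What you get right: the upper bound $C_{s,d}\le\widetilde C_{s,d}$ is immediate from \eqref{eq.lattice bound} applied to $\Lambda=\Lambda_d$, so the entire content is the lower bound; and the route you sketch --- pass from the sphere to the Euclidean energy problem (the identification \eqref{CsdRd} from \cite{HLSS2017}, which the paper records), then certify optimality of $\Lambda_d$ by a sharp Euclidean linear-programming/interpolation function, using complete monotonicity of $r\mapsto (r^2)^{-s/2}$ to reduce to Gaussians as in the discussion after Theorem~\ref{thm.cohnire} --- is indeed the accepted strategy. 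For $d=8,24$ the Cohn--Kumar--Miller--Radchenko--Viazovska universal-optimality machinery, combined with \eqref{CsdRd}, would settle those cases, but that is an external result of a wholly different order than anything in this paper and is not something your outline constructs.

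The genuine gaps are the two you partially acknowledge, and they are fatal to reading this as a proof. First, for $d=2$ and $d=4$ no sharp Euclidean ``magic'' functions are known, so the conjecture cannot be closed by this route at all in those dimensions. Second, your transfer step --- that the rescaled Levenshtein $1/N$-quadrature bounds on $\mathbb{S}^d$ converge to the sharp value of the Euclidean program --- is not available from the machinery of this paper and is in fact contradicted by its output: the asymptotic limit of the quadrature bound \eqref{PeterULB} along $N_k=D(d,2k)$ is exactly $A_{s,d}$, which is strictly below $\widetilde C_{s,d}$ for finite $s$ (see Figure~\ref{Asdfigure} and Table~\ref{tab.Bd}); the Levenshtein rule is optimal only over the polynomial subspaces $\Pi_\tau$, and no argument shows that enlarging the admissible class on the sphere recovers the full Euclidean LP value in the limit. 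So your proposal is a sound framing of why the conjecture is plausible and what a proof would require, but it does not prove the statement, and the paper itself only supplies the weaker certified lower bound $A_{s,d}$ together with numerical comparison to $\widetilde C_{s,d}$.
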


General lower bounds on $C_{s,d}$ have been less studied. A crude but simple lower bound arises from the following convexity argument (cf. \cite{kuisaf}).\

 Let $\omega_N^*=\left\{x_1,\ldots, x_N\right\}$ be a minimizing $N$-point
$s$-energy configuration on $\mathbb{S}^d$ and, for each $i = 1,\ldots, N$, let $\delta_i:=\min_{j\neq i} |x_i-x_j|$. With $C(x_i, \delta_i/2)$ denoting the spherical cap with center $x_i$ and Euclidean radius $\delta_i/2$, we deduce that  $\sum_{i=1}^N \mathcal{H}_d(C(x_i, \delta_i/2)) \leq\mathcal{H}_d(\mathbb{S}^d).$ It is easily verified that
 \begin{equation}\label{HdCap}
 \mathcal{H}_d(C(x_i, r))=\mathcal{H}_d(S^d) \frac{r^d}{\lambda_dd}+\mathcal{O}(r^{d+2}),\,\, \,\, r\to 0^{+},
 \end{equation}
 where
 \begin{equation}\lambda_d := \int\limits_{-1}^1(1-t^2)^{\frac{d-2}{2}} dt = \frac{\sqrt{\pi}\Gamma(\frac{d}{2})}{\Gamma(\frac{d+1}{2})}.
\label{eq.lambdad}
\end{equation}
Thus for $1>\epsilon>0$ and all $N$ sufficiently large we have from the asymptotic denseness of
the minimizing configurations (Theorem \ref{poppy seed}) that
$$(1-\epsilon)\mathcal{H}_d(S^d) \frac{1}{2^d\lambda_dd}\sum_{i=1}^N \delta_i^d \leq \sum_{i=1}^N \mathcal{H}_d(C(x_i, \delta_i/2)) \leq\mathcal{H}_d(\mathbb{S}^d)
$$
and so
\begin{equation}\label{upper}
\sum_{i=1}^N \delta_i^d \leq (1-\epsilon)^{-1}2^d\lambda_dd.
\end{equation}
By convexity, we also have
\[\mathcal{E}_s(\mathbb{S}^d,N)=\sum_{i\neq j}\frac{1}{|x_i-x_j|^s}\geq \sum_{i=1}^N \frac{1}{\delta_i^s}= \sum_{i=1}^N (\delta_i^d)^{-s/d}\geq N\bigg(\frac{1}{N}\sum_{i=1}^N \delta_i^d\bigg)^{-s/d}.\]
Consequently, from \eqref{upper} we obtain for $N$ large
$$\frac{\mathcal{E}_s(\mathbb{S}^d,N)}{N^{1+s/d}}\geq \left((1-\epsilon)^{-1}2^d\lambda_dd\right)^{-s/d}.
$$
Letting first $N\to\infty$ and then $\epsilon\to 0,$ Theorem \ref{poppy seed} yields the estimate

\begin{equation}
C_{s,d}\geq\Theta_{s,d}:=\left(\frac{\mathcal{H}_d(\mathbb{S}^d)}{2^{d}\lambda_dd}\right)^{s/d}=\frac{1}{2^s}
\left(\mathcal{H}_{d-1}(\mathbb{S}^{d-1})/d\right)^{s/d}.
\label{basicbound}
\end{equation}\

A less trivial lower bound is the following, established in \cite{BHS12}:
\begin{proposition}
If $d\geq 2$ and $s>d$, then for $(s-d)/2$ not an integer,
\[C_{s,d} \geq \xi_{s,d}:=\bigg[\frac{\pi^{d/2}\Gamma(1+\frac{s-d}{2})}{\Gamma(1+\frac{s}{2})}\bigg]^{s/d}\frac{d}{s-d}.\]
 \label{bhsbound}
\end{proposition}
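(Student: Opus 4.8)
The plan is to combine the Delsarte--Yudin linear programming bound on $\mathbb{S}^d$ with a one--parameter family of ``regularized'' Riesz kernels and then optimize the parameter; in contrast with the convexity argument for $\Theta_{s,d}$ above, this uses nothing about the minimizers beyond the value of the limit in \eqref{pop}. For $t>0$ set $f_t(u):=(t^2+2-2u)^{-s/2}$ on $[-1,1]$, so that $f_t(\langle x,y\rangle)=(t^2+|x-y|^2)^{-s/2}$ for $x,y\in\mathbb{S}^d$. Three facts are immediate: (i) $f_t(u)\le (2-2u)^{-s/2}=h_s(u)$ on $[-1,1]$, since $t^2\ge 0$; (ii) $f_t$ is absolutely monotone on $[-1,1]$ (its singularity sits at $u=1+t^2/2>1$, and $f_t^{(k)}(u)\ge 0$ on $[-1,1]$ for every $k\ge0$), hence, expanding in a power series at the origin or invoking Schoenberg's theorem, $f_t$ is positive definite on $\mathbb{S}^d$; (iii) $f_t(1)=t^{-s}$.

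For any configuration $\omega_N=\{x_1,\dots,x_N\}\subset\mathbb{S}^d$, the positive definiteness in (ii) gives $\sum_{i,j}f_t(\langle x_i,x_j\rangle)\ge N^2\widehat f_t$, where
\[\widehat f_t:=\lambda_d^{-1}\int_{-1}^1 f_t(u)\,(1-u^2)^{(d-2)/2}\,du=\int_{\mathbb{S}^d}(t^2+|x-p|^2)^{-s/2}\,d\sigma(x)\]
is the mean value of $f_t$ on $\mathbb{S}^d$ ($p\in\mathbb{S}^d$ fixed, $\sigma$ the normalized surface measure). Together with (i) and (iii) this yields the Delsarte--Yudin estimate
\[\mathcal{E}_s(\omega_N)=\sum_{i\ne j}|x_i-x_j|^{-s}\ \ge\ \sum_{i\ne j}f_t(\langle x_i,x_j\rangle)=\sum_{i,j}f_t(\langle x_i,x_j\rangle)-Nt^{-s}\ \ge\ N^2\widehat f_t-Nt^{-s},\]
valid for every $t>0$; hence $\mathcal{E}_s(\mathbb{S}^d,N)\ge N^2\widehat f_t-Nt^{-s}$.

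Next I would analyze $\widehat f_t$ as $t\to0^+$. Substituting $u=1-v/2$ and then $v=t^2w$ localizes the mean near $v=0$ and matches it with a Euclidean integral; one finds
\[\widehat f_t\ \sim\ \frac{t^{\,d-s}}{\mathcal{H}_d(\mathbb{S}^d)}\int_{\mathbb{R}^d}(1+|w|^2)^{-s/2}\,dw\ =\ \frac{t^{\,d-s}}{\mathcal{H}_d(\mathbb{S}^d)}\cdot\frac{\pi^{d/2}\Gamma(\tfrac{s-d}{2})}{\Gamma(\tfrac{s}{2})},\qquad t\to0^+.\]
In fact only the lower bound $\liminf_{t\to0^+}t^{\,s-d}\widehat f_t\ge I_s/\mathcal{H}_d(\mathbb{S}^d)$ is needed, where $I_s:=\int_{\mathbb{R}^d}(1+|w|^2)^{-s/2}\,dw$; this follows by restricting the integral defining $\widehat f_t$ to a small geodesic cap about $p$ and comparing with the integral over a Euclidean ball. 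Now take $t=t_N:=(\beta/N)^{1/d}$ with $\beta>\mathcal{H}_d(\mathbb{S}^d)/I_s$ fixed. Since $t_N\to0$, the estimate of the previous paragraph gives $\mathcal{E}_s(\mathbb{S}^d,N)/N^{1+s/d}\ge\beta^{-s/d}\big((1-o(1))I_s\beta/\mathcal{H}_d(\mathbb{S}^d)-1\big)$ as $N\to\infty$, so $\liminf_N\mathcal{E}_s(\mathbb{S}^d,N)/N^{1+s/d}\ge\beta^{-s/d}\big(I_s\beta/\mathcal{H}_d(\mathbb{S}^d)-1\big)$ for every such $\beta$. Maximizing the right side over $\beta$ (the maximizer is $\beta^\ast=s\mathcal{H}_d(\mathbb{S}^d)/((s-d)I_s)$) gives
\[\liminf_{N\to\infty}\frac{\mathcal{E}_s(\mathbb{S}^d,N)}{N^{1+s/d}}\ \ge\ \frac{d}{s-d}\left(\frac{s-d}{s}\right)^{s/d}\left(\frac{I_s}{\mathcal{H}_d(\mathbb{S}^d)}\right)^{s/d}.\]
Since \eqref{pop} (with $A=\mathbb{S}^d$) says $C_{s,d}=\mathcal{H}_d(\mathbb{S}^d)^{s/d}\lim_N\mathcal{E}_s(\mathbb{S}^d,N)/N^{1+s/d}$, multiplying by $\mathcal{H}_d(\mathbb{S}^d)^{s/d}$ and using $\Gamma(\tfrac{s-d}{2})=\tfrac{2}{s-d}\Gamma(1+\tfrac{s-d}{2})$ and $\Gamma(\tfrac{s}{2})=\tfrac{2}{s}\Gamma(1+\tfrac{s}{2})$ collapses the constant to exactly $\xi_{s,d}$.

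The one genuinely delicate step is the $t\to0^+$ lower estimate of the spherical mean $\widehat f_t$ with the sharp constant $I_s/\mathcal{H}_d(\mathbb{S}^d)$: one must check that replacing $|x-p|^2=2-2\cos\gamma$ by $\gamma^2$ and the surface element $(\sin\gamma)^{d-1}d\gamma$ by $\gamma^{d-1}d\gamma$ alters only lower--order terms. Everything else — the positive definiteness of $f_t$, the domination $f_t\le h_s$, and the scalar optimization in $t$ (equivalently $\beta$) — is routine. I would also note that this argument only ever uses $\Gamma(\tfrac{s-d}{2})$ with positive argument and so needs no restriction on $(s-d)/2$; the integrality hypothesis in the statement appears to be an artifact of the argument used in \cite{BHS12}.
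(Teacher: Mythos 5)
Your argument is correct, and it is genuinely different from what the paper does: the paper does not prove Proposition \ref{bhsbound} at all, but simply quotes it from \cite{BHS12}. What you have written is a self-contained derivation that sits squarely inside the paper's own linear programming framework: the smoothed kernel $f_t(u)=(t^2+2-2u)^{-s/2}$ is analytic on a neighborhood of $[-1,1]$ with nonnegative Taylor coefficients, hence satisfies the hypotheses of Theorem \ref{thm.del}, and your estimate $\mathcal{E}_s(\mathbb{S}^d,N)\ge N^2\widehat f_t - N t^{-s}$ is exactly the Delsarte--Yudin bound $f_0N^2-f(1)N$ with the trivial ``constant term only'' test function, rather than the Levenshtein $1/N$-quadrature machinery that powers Theorem \ref{thm.csd bound}. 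The scaling $t_N=(\beta/N)^{1/d}$, the local comparison of the spherical mean with $\int_{\mathbb{R}^d}(1+|w|^2)^{-s/2}\,dw$ (your one delicate step, which is fine: $2-2\cos\gamma\le\gamma^2$ and $\sin\gamma\ge(1-\epsilon)\gamma$ on a small cap give the needed $\liminf$ with the sharp constant, using $s>d$ for convergence at infinity), and the optimization in $\beta$ do collapse, after $\Gamma(\tfrac{s-d}{2})=\tfrac{2}{s-d}\Gamma(1+\tfrac{s-d}{2})$ and $\Gamma(\tfrac{s}{2})=\tfrac{2}{s}\Gamma(1+\tfrac{s}{2})$, to exactly $\xi_{s,d}$; the passage from the $N\to\infty$ limit to $C_{s,d}$ via \eqref{pop} is the same device the paper uses in proving Theorem \ref{thm.csd bound}. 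Two remarks on what each route buys: your proof shows the hypothesis that $(s-d)/2$ not be an integer is indeed superfluous for the inequality itself (it is an artifact of the exact evaluation used in \cite{BHS12}), so you prove a slightly stronger statement than quoted; on the other hand, it also makes transparent why Theorem \ref{thm.csd bound} must improve on $\xi_{s,d}$, since $A_{s,d}$ arises from the same LP bound but with the full Levenshtein quadrature in place of the single node you implicitly use.
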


Our main result for Riesz potentials is the following improvement over the lower bounds for $C_{s,d}$ in  \eqref{basicbound} and  Proposition \ref{bhsbound}.

\begin{theorem}  For a fixed dimension $d$, let $z_{i}$ be the $i$-th smallest positive zero of the Bessel function $J_{d/2}(z)$, $i = 1,2,\ldots .$ Then, for $s>d,$

\begin{equation}C_{s,d}\geq A_{s,d},
\label{eq.Asd}
\end{equation}
where
\begin{equation}
    A_{s,d}:=\bigg[\frac{\pi^{\frac{d+1}{2}}\Gamma(d+1)}{\Gamma(\frac{d+1}{2})}\bigg]^{s/d}\frac{4}{\lambda_d\Gamma(d+1)}\sum_{i=1}^\infty(z_{i})^{d-s-2}\big(J_{d/2+1}(z_{i})\big)^{-2}
\end{equation}
and $\lambda_d$ is as defined in \eqref{eq.lambdad}.
\label{thm.csd bound}
\end{theorem}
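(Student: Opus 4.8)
The plan is to apply the Delsarte–Yudin linear programming framework to the sphere $\mathbb{S}^d$ with a carefully chosen test function, derive a lower bound for $\mathcal{E}_s(\mathbb{S}^d,N)/N^2$ that degenerates in the right way as $N\to\infty$, and then invoke the Poppy-seed bagel theorem (Theorem~\ref{poppy seed}) to convert the asymptotics into a bound on $C_{s,d}$. Concretely, one seeks a function $f$ on $[-1,1]$ with $f(t)\le h_s(t)=(2-2t)^{-s}$ for $t\in[-1,1)$ whose expansion in Gegenbauer (ultraspherical) polynomials $\{P_k^{(d)}\}$ has nonnegative coefficients $f_k\ge 0$ for $k\ge 1$; then the standard positive-definiteness argument gives $E_s(\omega_N)\ge f_0 N^2 - f(1)N$ for every $N$-point configuration, hence $\mathcal{E}_s(\mathbb{S}^d,N)\ge f_0 N^2 - f(1)N$. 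The subtlety in the hypersingular regime is that $h_s(1)=+\infty$, so a single fixed $f$ cannot have $f(1)$ finite and $f_0$ comparable to the true growth rate $C_{s,d}N^{s/d}/\mathcal{H}_d(\mathbb{S}^d)^{s/d}$; instead one must use a family $f=f^{(N)}$ whose "resolution" sharpens with $N$.

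The natural choice, following Levenshtein and the Cohn–de~Courcy-Ireland circle of ideas, is to take $f^{(N)}$ built from the reproducing kernel of a spherical cap, i.e. essentially $f(t) = c\,\big(\text{Bessel-type kernel concentrated near } t=1 \text{ at scale } N^{-1/d}\big)$, so that after rescaling the relevant spectral sum becomes a sum over zeros of $J_{d/2}$. More precisely, I would: (i) rescale $t = 1 - u/N^{2/d}$ (equivalently geodesic distance $\sim N^{-1/d}$) and use the classical limit in which Gegenbauer polynomials $P_k^{(d)}(1-u/(2n^2))$ with $k\sim n z$ converge to Bessel functions $J_{d/2}(z\sqrt{u})$-type expressions, with the discrete spectrum $\{k\}$ limiting to the continuous variable; (ii) choose $f$ to be (a truncation/limit of) the function whose "Fourier–Bessel" data is supported so that $f(t)\le (2-2t)^{-s}$ is forced by an identity of the form $(2-2t)^{-s} = \int_0^\infty (\text{positive weight in } z)\,(\text{Bessel kernel at level }z)(t)\,dz$ — the weight will involve $z^{d-s-2}$, which is exactly the exponent appearing in $A_{s,d}$, and the $\big(J_{d/2+1}(z_i)\big)^{-2}$ factors are the normalizing constants of the Fourier–Bessel basis on a ball; (iii) extract $f_0$ from this construction, multiply by $N^2$, and identify $\lim_{N\to\infty} f_0^{(N)} N^{2}/N^{1+s/d}$ with the constant $A_{s,d}$ up to the geometric prefactor $\big[\pi^{(d+1)/2}\Gamma(d+1)/\Gamma(\tfrac{d+1}{2})\big]^{s/d}$, which collects $\mathcal{H}_d(\mathbb{S}^d)$, $\lambda_d$, and the normalization of Hausdorff measure; (iv) conclude via \eqref{pop} that $C_{s,d}\ge A_{s,d}$.

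The main obstacle, and the part requiring genuine care, is step (i)–(ii): making rigorous the passage from the discrete linear program on $\mathbb{S}^d$ (finite $N$, polynomial test functions, Gegenbauer coefficients) to the continuous Bessel-function optimization on $\mathbb{R}^d$, and in particular verifying that the limiting test function $f$ genuinely satisfies the pointwise constraint $f(t)\le (2-2t)^{-s}$ on all of $[-1,1)$ — not merely near $t=1$ — while keeping its nonzero-index Gegenbauer coefficients nonnegative. This is exactly where one needs a clean integral representation of $(2-2t)^{-s}$ against the Fourier–Bessel kernels with a manifestly nonnegative weight proportional to $z^{d-s-2}$; establishing the positivity of that weight and the convergence of the series $\sum_i z_i^{d-s-2}\big(J_{d/2+1}(z_i)\big)^{-2}$ (which converges precisely because $s>d$, using $z_i\sim i\pi$ and $J_{d/2+1}(z_i)^{-2}\sim \text{const}\cdot z_i$) is the technical heart. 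A secondary obstacle is bookkeeping the constants: one must track the normalization conventions for $\mathcal{H}_d$, the factor $\lambda_d$ from \eqref{eq.lambdad}, and the surface area $\mathcal{H}_d(\mathbb{S}^d) = (d+1)\mathcal{H}_{d+1}(\mathbb{B}^{d+1}) = 2\pi^{(d+1)/2}/\Gamma(\tfrac{d+1}{2})$, so that the final prefactor emerges in the stated closed form; I would do this last, after the structural argument is in place.
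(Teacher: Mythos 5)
Your outer strategy (Delsarte--Yudin LP on $\mathbb{S}^d$ with $N$-dependent test functions, then the Poppy-seed bagel theorem \eqref{pop} to pass to $C_{s,d}$) matches the paper's, but the core of your argument is missing rather than merely deferred. Everything hinges on actually producing, for each $N$, a feasible function $f^{(N)}\le h_s$ with nonnegative Gegenbauer coefficients and on computing $\lim_{N\to\infty} f_0^{(N)}N^{1-s/d}$; you yourself flag this as ``the main obstacle'' and ``the technical heart,'' and the device you propose for it does not work as stated. An exact representation of the Riesz potential against Fourier--Bessel kernels with positive weight would give $f=h_s$, hence $f(1)=+\infty$ in the hypersingular range $s>d$, and the LP bound $f_0N^2-f(1)N$ becomes vacuous; so one must truncate or interpolate, and it is precisely the verification that the truncated/limiting function stays below $(2-2t)^{-s/2}$ on all of $[-1,1)$ \emph{and} keeps nonnegative coefficients that constitutes the proof. (Note also the exponent: $|x-y|^{-s}=(2-2\langle x,y\rangle)^{-s/2}$, not $(2-2t)^{-s}$.) As written, the proposal is a plausible program in the spirit of Cohn--de Courcy-Ireland, not a proof.

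The paper closes exactly this gap by not constructing the minorant at all: it invokes the universal lower bound \eqref{PeterULB} of Boyvalenkov et al.\ (Theorem \ref{thm.hbound}), which already packages the feasibility step (Hermite interpolation of an absolutely monotone $h$ at the Levenshtein quadrature nodes of Theorem \ref{thm.nodeformula}), so that $\mathcal{E}_{h_s}(\mathbb{S}^d,N)\ge N^2\sum_i\rho_i h_s(\alpha_i)$ comes for free. The actual work is then purely asymptotic: along the subsequence $N_k=D(d,2k)$ the nodes are the zeros $\gamma_{k,i}^{1,0}$ of the adjacent Jacobi polynomial $P_k^{1,0}$, and Szeg\H{o}'s Mehler--Heine-type limit (Theorem \ref{thm.szego}, Corollary \ref{thm.jacobi zeros}) together with Lemmas \ref{lem.subindex} and \ref{lem.aha} gives $k\cos^{-1}(\alpha_i)\to z_i$, $k^s h_s(\alpha_i)\to z_i^{-s}$, and, via the Christoffel--Darboux form of the weights \eqref{weights}, $k^d\rho_i\to 2/\bigl(\lambda_d z_i^{2-d}J_{d/2+1}(z_i)^2\bigr)$; summing over $i$, letting $m\to\infty$, and applying \eqref{pop} yields $A_{s,d}$. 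If you want to salvage your route, you would essentially have to reprove the feasibility content of Theorem \ref{thm.hbound} (or of Cohn--de Courcy-Ireland's construction) yourself; alternatively, adopt the quadrature-based bound as the paper does and supply only the Bessel asymptotics for nodes and weights, where your observations about $z_i\sim i\pi$, $J_{d/2+1}(z_i)^{-2}\sim \pi z_i/2$, and convergence of $\sum_i z_i^{d-s-1}$ for $s>d$ are correct and do become relevant.
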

For $d=1$, $A_{s,d} = 2\zeta(s),$ which is optimal.  Furthermore, as we prove in Section~3,  both $A_{s,d}$ and $\xi_{s,d}$ have the same dominant behavior as $C_{s,d}$ as  $s\to d^{+}$; namely they all have a simple pole at $s=d$ with the same residue.   In Figure~\ref{FigBnds}, we compare
the bounds $A_{s,2}$, $\Theta_{s,2}$, $\xi_{s,2}$ with the conjectured value $\widetilde{C}_{2,s}$.
 \begin{figure}[htbp]
\begin{center}
\includegraphics[width=4in]{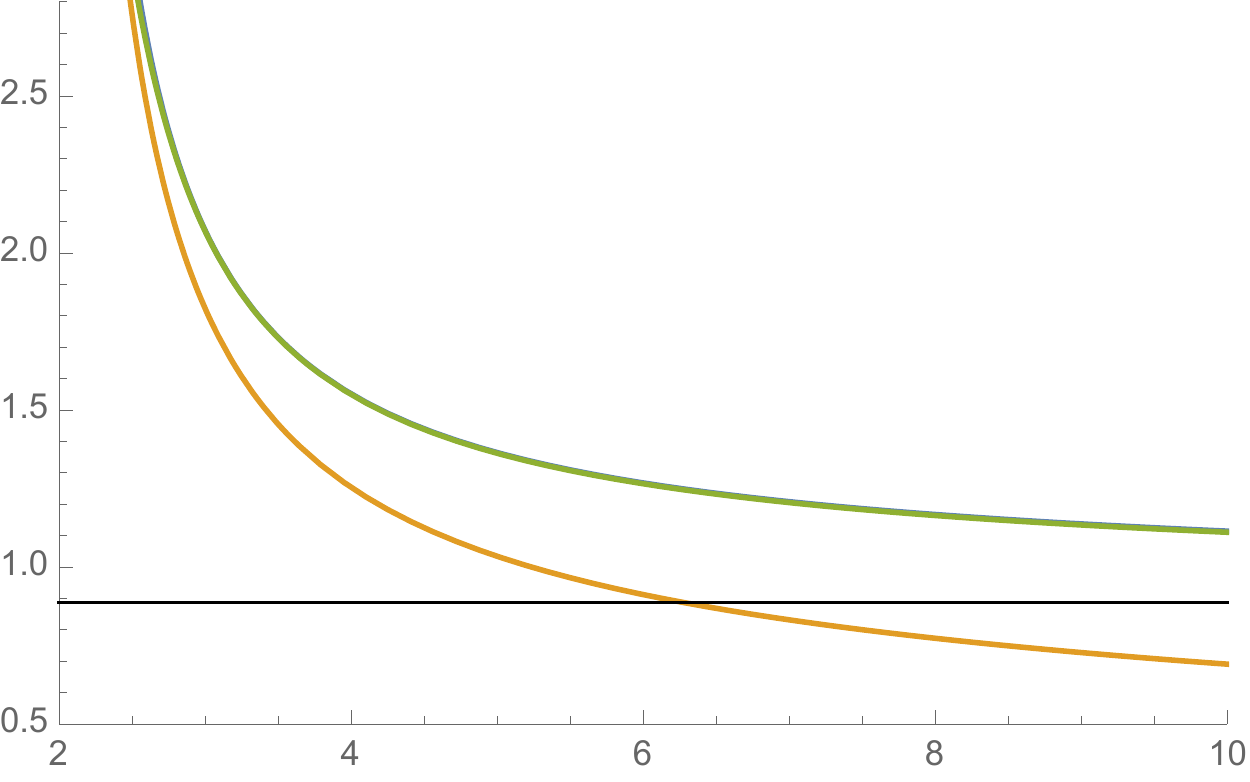}
\caption{{The two lower graphs show  $\Theta_{s,2}^{1/s}$ (constant graph) and $\xi_{s,2}^{1/s}$ while the upper two graphs (indistinguishable on this scale) show both $A_{s,2}^{1/s}$ and the conjectured value   $\widetilde{C}_{2,s}^{1/s}$ as $s$ ranges from $2$ to $10$.
}}
\label{FigBnds}
\end{center}
\end{figure}

\begin{proposition}Let $d\in {\mathbb N}$. Then
\begin{equation}
\lim_{s\to d^+} (s-d)C_{s,d} =\lim_{s\to d^+} (s-d)\xi_{s,d}=\lim_{s\to d^+} (s-d)A_{s,d}=  \frac{2\pi^{d/2}}{\Gamma(\frac{d}{2})}.
\label{eq.2tight}
\end{equation}
\label{thm.2tight}
\end{proposition}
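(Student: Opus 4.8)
The plan is to establish the three limits in \eqref{eq.2tight} separately, getting those for the explicit quantities $\xi_{s,d}$ and $A_{s,d}$ by direct asymptotics and the one for $C_{s,d}$ by sandwiching it between a known lower bound and the lattice upper bound. For $\xi_{s,d}$ this is immediate from its closed form: $(s-d)\xi_{s,d}=d\bigl[\pi^{d/2}\Gamma(1+\tfrac{s-d}{2})/\Gamma(1+\tfrac s2)\bigr]^{s/d}$, and as $s\to d^+$ the bracket tends to $\pi^{d/2}/\Gamma(1+\tfrac d2)$ and the exponent to $1$, so with $\Gamma(1+\tfrac d2)=\tfrac d2\Gamma(\tfrac d2)$ we get $\lim_{s\to d^+}(s-d)\xi_{s,d}=2\pi^{d/2}/\Gamma(\tfrac d2)$.

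The technical heart is the analysis of the Bessel series in $A_{s,d}$. Set $\nu=d/2$ and $\epsilon=s-d$, and write its general term as $T_i(s)=z_i^{-2-\epsilon}\bigl(J_{\nu+1}(z_i)\bigr)^{-2}$. From the recurrence $J_{\nu+1}(z)=(\nu/z)J_\nu(z)-J_\nu'(z)$ one gets $J_{\nu+1}(z_i)=-J_\nu'(z_i)$, and combining the classical large-$i$ asymptotics $\bigl(J_\nu'(z_i)\bigr)^{-2}=\tfrac{\pi}{2}\,z_i\,(1+O(1/i))$ with the McMahon expansion $z_i=i\pi\,(1+O(1/i))$ yields
\[
T_i(s)=\frac{\pi}{2}\,(i\pi)^{-1-\epsilon}+b_i(\epsilon),\qquad |b_i(\epsilon)|\le C\,i^{-2}\ \ (i\ge i_0),
\]
with $C$ and $i_0$ independent of $\epsilon\in[0,1]$; since the finitely many remaining $b_i(\epsilon)$ are continuous in $\epsilon$, we have $\sup_{\epsilon\in[0,1]}\sum_{i\ge1}|b_i(\epsilon)|<\infty$. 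Hence
\[
(s-d)\sum_{i\ge1}T_i(s)=\frac{\pi^{-\epsilon}}{2}\,\epsilon\,\zeta(1+\epsilon)+\epsilon\sum_{i\ge1}b_i(\epsilon)\longrightarrow\tfrac12\qquad(s\to d^+),
\]
using $\epsilon\,\zeta(1+\epsilon)\to1$. Feeding this into the definition of $A_{s,d}$---whose prefactor $\bigl[\pi^{(d+1)/2}\Gamma(d+1)/\Gamma(\tfrac{d+1}{2})\bigr]^{s/d}$ tends to $\pi^{(d+1)/2}\Gamma(d+1)/\Gamma(\tfrac{d+1}{2})$---and simplifying with $\lambda_d=\sqrt{\pi}\,\Gamma(\tfrac d2)/\Gamma(\tfrac{d+1}{2})$ from \eqref{eq.lambdad}, one obtains
\[
\lim_{s\to d^+}(s-d)A_{s,d}=\frac{\pi^{(d+1)/2}\Gamma(d+1)}{\Gamma(\tfrac{d+1}{2})}\cdot\frac{4}{\lambda_d\,\Gamma(d+1)}\cdot\frac12=\frac{2\pi^{(d+1)/2}}{\lambda_d\,\Gamma(\tfrac{d+1}{2})}=\frac{2\pi^{d/2}}{\Gamma(\tfrac d2)}.
\]

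For $C_{s,d}$, Theorem~\ref{thm.csd bound} gives $C_{s,d}\ge A_{s,d}$, so by the previous step $\liminf_{s\to d^+}(s-d)C_{s,d}\ge 2\pi^{d/2}/\Gamma(\tfrac d2)$. For the reverse inequality I would apply the lattice bound \eqref{eq.lattice bound} with $\Lambda=\mathbb{Z}^d$ (covolume $1$), giving $C_{s,d}\le\zeta_{\mathbb{Z}^d}(s)=\sum_{0\neq x\in\mathbb{Z}^d}|x|^{-s}$; this Epstein zeta function has a simple pole at $s=d$ with residue $\mathcal{H}_{d-1}(\mathbb{S}^{d-1})=2\pi^{d/2}/\Gamma(\tfrac d2)$ (classical; alternatively, compare the tail of the sum with $\int_{|x|\ge R}|x|^{-s}\,dx=\mathcal{H}_{d-1}(\mathbb{S}^{d-1})R^{d-s}/(s-d)$ for large $R$ and note the remaining finitely many terms are killed by the factor $s-d$). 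Hence $\limsup_{s\to d^+}(s-d)C_{s,d}\le 2\pi^{d/2}/\Gamma(\tfrac d2)$, and together with the $\liminf$ bound this gives $\lim_{s\to d^+}(s-d)C_{s,d}=2\pi^{d/2}/\Gamma(\tfrac d2)$; all three limits in \eqref{eq.2tight} therefore coincide.

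The step I expect to require the most care is the uniform-in-$\epsilon$ control of the errors $b_i(\epsilon)$: it relies on estimates for the Bessel zeros $z_i$ and for $J_\nu'(z_i)$ that are precise enough and free of $\epsilon$, and it is exactly this that licenses interchanging $\lim_{s\to d^+}$ with the infinite sum and reduces the whole computation to the elementary fact $\epsilon\,\zeta(1+\epsilon)\to1$.
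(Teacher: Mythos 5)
Your argument is correct and follows essentially the same route as the paper: direct computation for $\xi_{s,d}$; Bessel asymptotics plus McMahon's expansion of the zeros $z_i$ to reduce the series in $A_{s,d}$ to a zeta function with a simple pole of residue $1$ (you compare with the Riemann zeta $\zeta(1+\epsilon)$ via a uniformly summable error, the paper with a Hurwitz zeta sandwich); and a squeeze of $C_{s,d}$ between an already-established lower bound and the lattice Epstein-zeta upper bound. The only cosmetic differences are that you use $A_{s,d}$ (rather than $\xi_{s,d}$, as in the paper) for the lower bound on $C_{s,d}$ and specialize the lattice bound to $\mathbb{Z}^d$ instead of citing the general residue of $\zeta_\Lambda$ at $s=d$; both choices are valid.
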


As illustrated at the end of Section 2, the Levenshtein $1/N$-quadrature rules give bounds on the minimal separation distance for optimal packings on $\mathbb{S}^d$, and  $A_{s,d}$ recovers these bounds as $s\to\infty$. For $d = 2,4,8,$ and $24$, letting $\widetilde{C}_{s,d}$ be the conjectured values of $C_{s,d}$ from Conjecture \ref{Csdconj}, it is easy to verify that $\lim_{s\to\infty}(\widetilde{C}_{s,d}/A_{s,d})^{1/s}$ exists. Numerical comparisons between
$A_{s,d}$ and $\widetilde{C}_{s,d}$ are illustrated in Section 4.\

We next consider bounds for the Gaussian energy of infinite point configurations in $\mathbb{R}^d$. Our goal is to show that the method used to prove Theorem \ref{thm.csd bound} provides an alternative
approach to deriving the lower bounds obtained by Cohn and de Courcy-Ireland \cite{CohnIre}. We begin
with some essential definitions.

\begin{definition}\label{lowerfenergy}
For an infinite configuration $\mathcal{C}\subset\mathbb{R}^d$ and $f:(0,\infty)\to \mathbb{R}$, the \textit{lower f-energy of} $\mathcal{C}$ is

\[E_f(\mathcal{C}):=\liminf_{R\to\infty}\frac{1}{\#(\mathcal{C}\cap B^d(R))}\sum_{\substack{ x,y\in\mathcal{C}\cap B^d(R)\\ x\neq y}} f(|x-y|),\]
where $\#$ denotes cardinality and $B^d(R)$ is the $d$-dimensional ball of radius $R$ centered at $0$. If the limit exists, we call it the \textit{f-energy of} $\mathcal{C}$.
\end{definition}

\begin{definition}\label{lowerdensity}
The \textit{lower density} $\rho$ of a configuration $\mathcal{C}$ is defined to be
\[\rho:=\liminf_{R\to\infty}\frac{\#(\mathcal{C}\cap B^d(R))}{\textup{vol}(B^d(R))}.\]
If the limit exists, we call it the \textit{density of} $\mathcal{C}$.
\end{definition}

We shall show that universal lower bounds developed in \cite{Petersquared} and based on Delsarte-Levenshtein methods can be used to prove the
following estimate of Cohn and de Courcey-Ireland.  (The results in \cite{CohnIre} came to the authors' attention during the preparation of this manuscript  and appear in the dissertation of Michaels \cite{Mic2017}.)
\begin{theorem}[\cite{CohnIre}]
Let $f(|x-y|) = e^{-\alpha |x-y|^2},\,\alpha>0,$ be a Gaussian potential in $\mathbb{R}^d$ and choose $R_\rho$ so that \rm{vol}$(B^d(R_\rho/2))$ = $\rho$. \textit{Then the minimal f-energy for point configurations of density} $\rho$ \textit{in} $\mathbb{R}^d$ \textit{is bounded below by}

\begin{equation}
\frac{4}{\lambda_d\Gamma(d+1)}\sum_{i=1}^\infty z_i^{d-2}\big(J_{d/2+1}(z_{i})\big)^{-2} f\bigg(\frac{z_i}{\pi R_\rho}\bigg),
\label{eq.cohnire}
\end{equation}
\textit{where the} $z_i$'s \textit{are as in Theorem} \rm\ref{thm.csd bound}.
\label{thm.cohnire}
\end{theorem}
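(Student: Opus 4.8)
The plan is to mirror the proof of Theorem~\ref{thm.csd bound}, replacing the Riesz kernel $h_s$ by the Gaussian $f$ and adapting the linear programming machinery to the infinite-configuration setting. First I would set up the Delsarte--Yudin/Levenshtein framework for a \emph{single} large ball $B^d(R)$: given $N := \#(\mathcal{C}\cap B^d(R))$ points, one wants a feasible function $g$ on $[-1,1]$ with $g(t)\le h(t)$ (where $h$ encodes $f$ via the chordal metric) whose positive-definite expansion in Gegenbauer polynomials yields a lower bound on $\frac1N\sum_{x\ne y} f(|x-y|)$ of the form $g_0 N - g(1)$, exactly as in the spherical case. The crucial point, and the one that makes the Gaussian computation succeed where Riesz does not force the use of the Poppy-seed bagel theorem, is that the Gaussian is absolutely monotone as a function of squared distance, so the universal bounds of \cite{Petersquared} apply directly; this produces a sum over Bessel zeros $z_i$ weighted by $f(z_i/(\pi R_\rho))$, the continuum analogue of the Levenshtein quadrature nodes.

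Second, I would take the limit as the radius of the ball (on which we do local LP) tends to infinity while the density is held at $\rho$. Here the scaling $R_\rho$ defined by $\mathrm{vol}(B^d(R_\rho/2)) = \rho$ enters: in the flat $\mathbb{R}^d$ problem the role played by $N$ on the sphere is played by the density, and rescaling space so that the configuration has density $1$ turns the sphere's dimension-$N$ Levenshtein node $t_k^{(N)} \to 1$ into the fixed Bessel-zero node $z_i/(\pi R_\rho)$ after the standard limit relation between Gegenbauer polynomials $C_n^{\lambda}(\cos(\theta/n))$ and Bessel functions $J_{d/2}$. The weights $\frac{4}{\lambda_d\Gamma(d+1)} z_i^{d-2} (J_{d/2+1}(z_i))^{-2}$ are precisely the $s\to$ nothing, density-normalized limits of the Levenshtein quadrature weights; I would extract them from the same asymptotic analysis of Gegenbauer polynomials and their zeros that underlies the proof of Theorem~\ref{thm.csd bound}, noting the parallel with \eqref{eq.cohnire} being the term-by-term image of the series defining $A_{s,d}$ under $ (z_i)^{d-s-2} \mapsto z_i^{d-2} f(z_i/(\pi R_\rho))$ and the prefactor $[\cdots]^{s/d}$ being absorbed into the rescaling.

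Third, I would handle the $\liminf$'s carefully: the lower $f$-energy and the lower density are both defined via $\liminf_{R\to\infty}$, so one must ensure the LP bound obtained on each finite ball passes to the limit without losing constant factors. Since $f = e^{-\alpha r^2}$ is positive, bounded, and decaying, the energy sum restricted to $B^d(R)$ underestimates nothing that matters, and one can pick a subsequence of radii along which both the cardinality ratio and the energy average converge; applying the finite-$R$ LP bound along this subsequence and letting $R\to\infty$ gives the claimed inequality. Convexity (Jensen), exactly as in the derivation of \eqref{basicbound}, will be used to pass from the per-ball bound of the form $g_0 N - g(1)$ to a bound purely in terms of $\rho$.

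The main obstacle I anticipate is the limit-interchange step: rigorously justifying that the discrete Levenshtein quadrature on $\mathbb{S}^d$ (or equivalently the finite-ball LP) converges, nodes and weights together, to the Bessel-zero quadrature on $\mathbb{R}^d$ with the stated normalization. This requires uniform control of the Gegenbauer-to-Bessel asymptotics and of the remainder in the $1/N$-quadrature rule as $N\to\infty$ (equivalently $R\to\infty$), together with a dominated-convergence argument to exchange the limit with the infinite sum over $i$; the absolute monotonicity of $f$ in $r^2$ and the exponential decay of $f(z_i/(\pi R_\rho))$ in $i$ are what make this tail control feasible. Everything else is a transcription of the already-established spherical argument, with $N\leftrightarrow\rho$ and the Poppy-seed bagel theorem replaced by the direct definition of density.
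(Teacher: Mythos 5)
There is a genuine gap at the heart of your plan: you propose to run the Delsarte--Yudin/Gegenbauer LP machinery ``for a single large ball $B^d(R)$'' in $\mathbb{R}^d$, but Theorem \ref{thm.del} and the Levenshtein quadrature of Theorems \ref{thm.nodeformula}--\ref{thm.hbound} are statements about configurations on $\mathbb{S}^d$; the positive-definiteness of the Gegenbauer polynomials $P_k(\langle x,y\rangle)$ and the exactness of the $1/N$-quadrature have no direct meaning for point sets in a flat Euclidean ball, and ``$h$ encodes $f$ via the chordal metric'' does not repair this, since the points of $\mathcal{C}\cap B^d(R)$ do not lie on any sphere. A genuinely Euclidean LP argument exists (it is what Cohn and de Courcy-Ireland do), but it rests on Fourier-analytic positive-definiteness in $\mathbb{R}^d$ and a quadrature built directly from Bessel zeros, not on the Gegenbauer framework you invoke; your proposal sets up neither route, and the final ``Jensen/convexity'' step converting a $g_0N-g(1)$ bound into a bound in terms of $\rho$ has no analogue that works here. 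The limit-interchange issue you flag as the main obstacle is in fact the easy part (the paper handles it exactly as in Theorem \ref{thm.csd bound}, by truncating to $m$ quadrature nodes and letting $m\to\infty$ last); the missing idea is how to connect the sphere to $\mathbb{R}^d$ at all.

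The paper's proof supplies precisely this bridge. First it stays entirely on $\mathbb{S}^d$: for each $N$ it applies Theorem \ref{thm.hbound} to the \emph{rescaled} Gaussian $h_N(t)=e^{-\alpha(2-2t)/(cN^{-1/d})^2}$ (absolutely monotone, so the universal bound applies), and the same node/weight asymptotics as in Theorem \ref{thm.csd bound} give the spherical estimate \eqref{GaussSphasymp}. Then, to transfer this to an arbitrary configuration $\mathcal{C}\subset\mathbb{R}^d$ of density $\rho$, it covers all but an $\epsilon$-fraction of $\mathbb{S}^d$ by finitely many disjoint small caps $C(a_\ell,r_\ell)$ and transplants the rescaled pieces $cN^{-1/d}\mathcal{C}\cap B^d(r_\ell N^{1/d}/c)$ into these caps via maps $\phi_\ell$ satisfying \eqref{phiineq}, so that distances shrink by at most a factor $(1-\kappa_2\epsilon)$ and hence the Gaussian energy of each transplanted piece is at most the corresponding $f_{\alpha_\epsilon}$-energy of $\mathcal{C}$; the cardinality count \eqref{cardOmegaNC} and the exponentially small cross-cap energies \eqref{crossen} then yield \eqref{alltogether}, i.e.\ an \emph{upper} bound for the spherical minimal $h_N$-energy in terms of $E_{f_{\alpha_\epsilon}}(\mathcal{C})$. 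Comparing with \eqref{GaussSphasymp} and letting $N\to\infty$, then $\epsilon\to 0$, forces $E_f(\mathcal{C})$ to be at least \eqref{eq.cohnire}. Some version of this transplantation (or a bona fide Euclidean LP framework) must be added to your argument before it can be considered a proof.
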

We remark that there is a strong relation connecting Theorems~\ref{thm.cohnire} and \ref{thm.csd bound}.  Indeed, 
if $f(r)=g(r^2)$ for some completely monotone function $g$ with sufficient decay, then there is some non-negative measure $\mu$ on $[0,\infty)$ such that (e.g., see \cite{Wid1941})
$$
f(r)=\int_0^\infty e^{-\alpha r^2}\, d\mu(\alpha).
$$
Then it follows that Theorem~\ref{thm.cohnire} also holds for such $f$ and, in particular, for hypersingular Riesz $s$-potentials $f_s(r)=(r^2)^{-s/2}$  for $s>d$.  
Furthermore, it is shown in \cite{HLSS2017} that the constant $C_{s,d}$ also appears in the context of minimizing the Riesz $s$-energy over infinite point configurations  $\mathcal{C}\subset \mathbb{R}^d$ with a fixed  density $\rho$:
\begin{equation}\label{CsdRd}
C_{s,d}=\inf_{\substack{\text{$\mathcal{C}$ has}\\ \text{density $1$}}}E_{f_s}(\mathcal{C}).
\end{equation}
 Combining \eqref{CsdRd} and Theorem~\ref{thm.cohnire} then provides an alternate proof of Theorem~\ref{thm.csd bound}.  

An outline of the remainder of this article is as follows. In Section 2, we describe the Delsarte-Yudin linear programming lower bounds and the Levenshtein $1/N$-quadrature rules. More thorough treatments can be found in \cite{MEBook}, \cite{Boumova}, and \cite{LevBig}. In Section \ref{section.csdproofs}, we present the proofs of Theorem \ref{thm.csd bound}, Proposition \ref{thm.2tight}, and Theorem \ref{thm.cohnire} using an asymptotic result on Jacobi polynomials from Szeg\H{o} \cite{Szego}. Finally, in Section \ref{section.ulbnum}, we discuss numerically the quality of the bound $A_{s,d}$ and formulate a natural conjecture.

\section{Linear Programming Bounds}

For $\alpha,\beta> -1$, let $\left\{P^{(\alpha,\beta)}_k\right\}_{k=1}^\infty$ denote the sequence of  Jacobi polynomials of respective degrees $k$ that are orthogonal with respect to the weight $\omega^{\alpha,\beta}(t) : = (1-t)^\alpha(1+t)^\beta$ on $[-1,1]$ and normalized by
\begin{equation}
P^{(\alpha,\beta)}_k(1) = 1.
\label{normalization}
\end{equation}While this normalization is crucial for the linear programming methods presented here, we note that many authors choose $P^{(\alpha,\beta)}_k(1) = \binom{k+\alpha}{k}$. For a fixed dimension $d\geq 1$, the Gegenbauer or ultraspherical polynomials are given by $P_k(t) := P^{(\frac{d-2}{2},\frac{d-2}{2})}_k(t)$ with weight $\omega_d(t) := \omega^{(\frac{d-2}{2},\frac{d-2}{2})}(t)$. For our purposes, the so-called \emph{adjacent polynomials}
\begin{equation}
P^{a,b}_k(t): = P^{(\frac{d-2}{2}+a,\frac{d-2}{2}+b)}_k(t), \qquad a,b \in\{0,1\},
\label{eq.adjacent jacobi}
\end{equation}
associated with the weights $\omega_d^{a,b}(t):=(1-t)^a(1+t)^b\omega_d(t),$ play an essential role.

 For functions $f:[-1,1]\rightarrow \mathbb{R}$ that are square integrable with
 respect to $\omega_d$ on $[-1,1]$, we consider its Gegenbauer expansion: $f(t) = \sum_{i=0}^\infty f_kP_k(t),$ where the $f_k$'s are given by

\begin{equation}
f_k := \frac{\int_{-1}^1f(t)P_k(t)\omega_d(t)dt}{\int_{-1}^1 [P_k(t)]^2\omega_d(t)dt}.
\label{eq.gegcoeffs}
\end{equation}
The following result forms the basis for the linear programming bounds for packing and energy on the sphere (see, for example, \cite{Delsarte} or \cite[Theorem 5.3.2]{MEBook}):
\begin{theorem}
If $f:[-1,1]\rightarrow \mathbb{R}$ is of the form

\[f(t) = \sum_{k=0}^{\infty}f_kP_k(t)\]
with $f_k\geq 0$ for all $k\geq 1$ and $\sum_{k=0}^\infty f_k < \infty$, then for any $N$-point subset $\omega_N = \left\{x_1,\ldots,x_N\right\}\subset\mathbb{S}^d$,
\begin{equation}
    \sum_{1\leq i\neq j\leq N} f(\langle x_i, x_j\rangle) \geq f_0 N^2 - f(1)N.
\end{equation}
Moreover, if $h: [-1,1] \rightarrow [0,\infty]$ and $h(t)\geq f(t)$ on $[-1,1]$, then for the energy kernel $K(x,y) := h(\langle x, y\rangle),$
\begin{equation}
E_K(\omega_N)\geq \mathcal{E}_K(\mathbb{S}^d,N)\geq f_0N^2-f(1)N.
\label{eq.yudin energy bound}
\end{equation}
Equality holds in \eqref{eq.yudin energy bound} and $\omega_N$ is an optimal (minimizing) $h$-energy configuration if and only if\

{\rm(i)} h(t)=f(t) for all $t\in\left\{\langle x_i,x_j \rangle: i\neq j\right\}$ and\

{\rm(ii)} \,for all $k\geq 1$, either $f_k = 0$ or $\displaystyle\sum_{i,j = 1}^N P_k(\langle x_i,x_j\rangle) = 0$.
\label{thm.del}
\end{theorem}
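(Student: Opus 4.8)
The plan is to reduce the statement to the positive semidefiniteness of the Gegenbauer kernels. The single nontrivial ingredient is the classical addition formula: taking an orthonormal basis $\{Y_{k,\ell}\}_\ell$ of the degree-$k$ spherical harmonics on $\mathbb{S}^d$, one has $P_k(\langle x,y\rangle)=c_{k,d}^{-1}\sum_\ell Y_{k,\ell}(x)Y_{k,\ell}(y)$ for a positive constant $c_{k,d}$, so that for any $x_1,\dots,x_N\in\mathbb{S}^d$,
\[
\sum_{i,j=1}^{N}P_k(\langle x_i,x_j\rangle)=c_{k,d}^{-1}\sum_{\ell}\Big(\sum_{i=1}^N Y_{k,\ell}(x_i)\Big)^2\ \ge\ 0,
\]
this quantity being equal to $N^2$ when $k=0$ since $P_0\equiv 1$. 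In the writeup I would quote this positivity directly from the Delsarte-theory references already cited, e.g.\ \cite{MEBook}.

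First I would record the elementary convergence facts. Since each $P_k$ is normalized by $P_k(1)=1$ and satisfies $|P_k(t)|\le 1$ on $[-1,1]$, and since $\sum_{k\ge 0}|f_k|=|f_0|+\sum_{k\ge 1}f_k<\infty$, the series $\sum_k f_kP_k(t)$ converges uniformly on $[-1,1]$; hence $f$ is continuous, $f(1)=\sum_k f_k<\infty$, and term-by-term summation against the finite double sum over $i,j$ is legitimate. Consequently
\[
\sum_{i,j=1}^{N}f(\langle x_i,x_j\rangle)=\sum_{k=0}^{\infty}f_k\sum_{i,j=1}^{N}P_k(\langle x_i,x_j\rangle)\ \ge\ f_0N^2,
\]
because the $k=0$ term equals $f_0N^2$ while every remaining term is the product of $f_k\ge 0$ with a nonnegative quantity. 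Subtracting the diagonal contribution $\sum_{i=1}^N f(\langle x_i,x_i\rangle)=Nf(1)$ yields $\sum_{1\le i\ne j\le N}f(\langle x_i,x_j\rangle)\ge f_0N^2-f(1)N$, which is the first assertion.

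For the energy statement I would chain inequalities: if $h\ge f$ pointwise on $[-1,1]$, then $E_K(\omega_N)=\sum_{i\ne j}h(\langle x_i,x_j\rangle)\ge\sum_{i\ne j}f(\langle x_i,x_j\rangle)\ge f_0N^2-f(1)N$, and taking the infimum over $\omega_N\subset\mathbb{S}^d$ gives the same lower bound for $\mathcal{E}_K(\mathbb{S}^d,N)$. For the equality characterization, observe that $E_K(\omega_N)$ attains the value $f_0N^2-f(1)N$ precisely when both inequalities in this chain are equalities. The first, $\sum_{i\ne j}h\ge\sum_{i\ne j}f$, is an equality iff $h(\langle x_i,x_j\rangle)=f(\langle x_i,x_j\rangle)$ for every $i\ne j$ (using $h\ge f$ termwise), which is (i); the second, $\sum_{i,j}f(\langle x_i,x_j\rangle)=f_0N^2$, is an equality iff $f_k\sum_{i,j}P_k(\langle x_i,x_j\rangle)=0$ for every $k\ge 1$, i.e.\ for each such $k$ either $f_k=0$ or $\sum_{i,j}P_k(\langle x_i,x_j\rangle)=0$, which is (ii). Since $f_0N^2-f(1)N$ is a universal lower bound for $\mathcal{E}_K(\mathbb{S}^d,N)$, the condition $E_K(\omega_N)=f_0N^2-f(1)N$ is equivalent to the conjunction that $\omega_N$ is minimizing and equality holds in \eqref{eq.yudin energy bound}, which completes the equivalence.

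I expect the only genuine obstacle to be the correct invocation of the addition formula / Schoenberg positive-definiteness (together with the bound $|P_k|\le 1$ on $[-1,1]$, which is itself a consequence of it); the uniform-convergence justification for exchanging the infinite sum with the finite double sum is routine, and everything after that is bookkeeping.
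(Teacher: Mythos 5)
Your proof is correct: the paper does not prove Theorem \ref{thm.del} but simply quotes it from \cite{Delsarte} and \cite[Theorem 5.3.2]{MEBook}, and your argument via the addition formula (positive semidefiniteness of $P_k(\langle x,y\rangle)$, the bound $|P_k|\le 1$ giving uniform convergence, term-by-term summation, and then splitting off the diagonal and the $k=0$ term) is exactly the standard Delsarte--Yudin proof given in those references. The equality analysis, tracking when each inequality in the chain is saturated, likewise matches the canonical treatment, so there is nothing to flag.
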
\

An $N$-point configuration $\omega_N = \left\{x_i\right\}_{i=1}^N\subset \mathbb{S}^d$ is called a \textit{spherical $\tau$-design} if

\[\int_{S^{d}} f(x)d\sigma_d(x) = \frac{1}{N}\sum_{i=1}^N f(x_i)\]
holds for all spherical polynomials $f$ of degree at most $\tau$,
where $\sigma_d$ denotes the normalized surface area measure on $\mathbb{S}^d$.
Using Theorem \ref{thm.del}, Delsarte, Goethals, and Seidel \cite{DGS} obtained an estimate for the minimum
 number of points on $\mathbb{S}^d$ that are necessary for a $\tau$-design. Namely, setting
\[B(d,\tau):=\min \left\{N :  \exists\ \omega_N \subset \mathbb{S}^d\textup{ a spherical $\tau$-design} \right\},\]
they show

\[B(d,\tau)\geq D(d,\tau):=\left\{\arraycolsep=1.4pt\def\arraystretch{1.4}
\begin{array}{ll}
    2\binom{d+k-1}{d} & \textup{ if } \tau = 2k-1,\\

    \binom{d+k}{d} + \binom{d+k-1}{d} & \textup{ if } \tau = 2k.
\end{array}\right.
\]

\begin{definition} A sequence of ordered pairs $\left\{(\alpha_i,\rho_i)\right\}_{i=1}^k$ is said to be a $1/N$-\textit{quadrature rule exact on a subspace} $\Lambda\subset C([-1,1])$ if $1>\alpha_1>\cdots>\alpha_k\geq -1$, $\rho_i>0$ for $i = 1,\ldots,k$, and for all $f\in \Lambda$,
\begin{equation}
    f_0=\frac{1}{\lambda_d}\int_{-1}^1f(t)\omega_d(t)dt = \frac{f(1)}{N}+\sum_{i=1}^k\rho_if(\alpha_i).
    \label{Levquad}
\end{equation}
\end{definition}
Theorem \ref{thm.del} gives rise immediately to the following:

\begin{theorem}
Let $\left\{(\alpha_i,\rho_i)\right\}_{i=1}^k$ be a $1/N$-quadrature rule exact on a subspace $\Lambda$. For $K(x,y) := h(\langle x,y\rangle)$, let $\mathcal{A}_h$ be the set of functions $f$ with $f(t)\leq h(t)$ on $[-1,1]$ that  satisfy the hypotheses of Theorem \ref{thm.del}. Then

\[\mathcal{E}_K(\mathbb{S}^d,N) \geq N^2\sum_{i=1}^k\rho_if(\alpha_i)\]
and
\[\sup_{f\in\Lambda\cap\mathcal{A}_h} N^2\sum_{i=1}^k\rho_if(\alpha_i)\leq N^2\sum_{i=1}^k\rho_ih(\alpha_i).\]
\label{thm.Yudinquad}
\end{theorem}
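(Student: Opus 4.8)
The plan is to derive both assertions directly from Theorem~\ref{thm.del} together with the quadrature identity~\eqref{Levquad}, so that the argument is essentially a bookkeeping computation. Fix an admissible test function $f\in\Lambda\cap\mathcal{A}_h$; recall this means $f(t)=\sum_{k\geq 0}f_kP_k(t)$ with $f_k\geq 0$ for $k\geq 1$ and $\sum_k f_k<\infty$, that $f(t)\leq h(t)$ on $[-1,1]$, and that $f$ lies in the subspace $\Lambda$ on which the rule $\{(\alpha_i,\rho_i)\}_{i=1}^k$ is exact.

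First I would establish the energy lower bound. Theorem~\ref{thm.del}, applied with this $f$ and the kernel $K(x,y)=h(\langle x,y\rangle)$, gives $\mathcal{E}_K(\mathbb{S}^d,N)\geq f_0N^2-f(1)N$. Since $f\in\Lambda$, the exactness relation~\eqref{Levquad} yields $f_0=f(1)/N+\sum_{i=1}^k\rho_if(\alpha_i)$; multiplying by $N^2$ and substituting gives $f_0N^2-f(1)N=f(1)N+N^2\sum_{i=1}^k\rho_if(\alpha_i)-f(1)N=N^2\sum_{i=1}^k\rho_if(\alpha_i)$, which is the claimed inequality. The only point requiring care here is that this substitution is legitimate precisely because $f$ was taken in $\Lambda$; for a general element of $\mathcal{A}_h$ the identity~\eqref{Levquad} need not hold.

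For the second assertion I would use the defining properties of a $1/N$-quadrature rule: each node $\alpha_i$ lies in $[-1,1)\subset[-1,1]$ and each weight $\rho_i$ is strictly positive. Since $f\leq h$ on $[-1,1]$, we get $\rho_if(\alpha_i)\leq\rho_ih(\alpha_i)$ for every $i$, hence $N^2\sum_{i=1}^k\rho_if(\alpha_i)\leq N^2\sum_{i=1}^k\rho_ih(\alpha_i)$ because $N^2>0$. As the right-hand side does not depend on $f$, taking the supremum over all $f\in\Lambda\cap\mathcal{A}_h$ preserves the inequality and gives the stated bound.

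There is no genuine analytic obstacle in this theorem: both parts follow immediately once Theorem~\ref{thm.del} and~\eqref{Levquad} are in hand, and the substantive work lies elsewhere---in constructing quadrature rules with favorable nodes (the Levenshtein rules) and in choosing near-optimal functions $f$, which is carried out in the subsequent sections. The two things to keep straight in the write-up are that the cancellation of the $f(1)N$ term is exactly what turns the Delsarte--Yudin bound into a ``node-only'' expression, and that the inequalities in the second part run in the correct direction because of the sign constraints $\rho_i>0$ and $N^2>0$.
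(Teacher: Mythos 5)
Your argument is correct and matches the paper's intent exactly: the paper states this result as following ``immediately'' from Theorem~\ref{thm.del}, and your write-up simply supplies the two intended steps, namely the Delsarte--Yudin bound $\mathcal{E}_K(\mathbb{S}^d,N)\geq f_0N^2-f(1)N$ combined with the exactness identity \eqref{Levquad} to cancel the $f(1)N$ term, and then the pointwise comparison $f\leq h$ with $\rho_i>0$ for the supremum bound. Nothing is missing, and your cautionary remark that \eqref{Levquad} applies only because $f\in\Lambda$ is precisely the right point to flag.
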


Levenshtein derives a $1/N$-quadrature given in Theorem \ref{thm.nodeformula} below to obtain the following bound for the maximal cardinality of a configuration $\omega_N\subset\mathbb{S}^d$ with largest inner product $s$. Let

\[A(d,s) := \max\left\{N: \exists\  \omega_N \subset\mathbb{S}^d \,{\rm{with}}\, \langle x_i,x_j\rangle \leq s,\, i\neq j\right\}.\]
Letting $\gamma_k^{a,b}$ denote the greatest zero of $P_k^{a,b}$, we partition $[-1,1]$ into the following disjoint union of countably many intervals. For $\tau = 1,2,\ldots$,
\[I_\tau := \left\{\arraycolsep=5pt\def\arraystretch{1.8}
\begin{array}{ll}
  [\gamma_{k-1}^{1,1},\gamma_k^{1,0}] & \textup{ if } \tau = 2k-1,\\

   [\gamma_k^{1,0},\gamma_k^{1,1}] & \textup{ if } \tau = 2k,
\end{array} \right.
\]
which are well defined by the interlacing properties $\gamma_{k-1}^{1,1}<\gamma_k^{1,0}<\gamma_k^{1,1}$. Then
\[A(d,s)\leq L(d,s),\]
where
\begin{equation}L(d,s) := \left\{\arraycolsep=5pt\def\arraystretch{1.8}
\begin{array}{ll}
   L_{2k-1}(d,s) := \binom{k+d-2}{k-1}[\frac{2k+d-2}{d} - \frac{P_{k-1}(s)-P_k(s)}{(1-s)P_k(s)}] & \textup{ if } s\in I_{2k-1},\\

   L_{2k}(d,s): = \binom{k+d-1}{k}[\frac{2k+d}{d}-\frac{(1+s)(P_k(s)-P_{k+1}(s))}{(1-s)(P_k(s)+P_{k+1}(s))}] & \textup{ if } s\in I_{2k}.
\end{array}\right.
\label{eq.Lev}
\end{equation}
The function $L(d,s)$ is called the \emph{Levenshtein function}. For fixed $d$, it is continuous and increasing in $s$ on $[-1,1]$. The formula for the Levenshtein function is such that the quadrature nodes given in Theorem \ref{thm.nodeformula} below will have weight $1/N$ at the node $\alpha_0 = 1$. At the endpoints of the intervals $\mathcal{I}_\tau$,
\begin{equation}
\begin{split}
    L_{2k-2}(d,\gamma^{1,1}_{k-1}) & = L_{2k-1}(d,\gamma^{1,1}_{k-1}) = D(d,2k-1),\\
    L_{2k-1}(d,\gamma^{1,0}_{k-1}) & = L_{2k}(d,\gamma^{1,0}_{k-1}) = D(d,2k),
\end{split}
\label{Levendpts}
\end{equation}
where $L_{\tau}$ denotes the restriction of $L$ to the interval $I_{\tau}.$\

Setting
$$
r^{a,b}_i: = \left(\frac{1}{\lambda_d^{a,b}}\int_{-1}^1(P^{a,b}_i(t))^2\omega^{a,b}_{d}(t)\,dt\right)^{-1},
$$
where
$\lambda_d^{a,b}:=\int_{-1}^1\omega^{a,b}_{d}(t)\,dt,$
we  define
\begin{equation}
Q^{a,b}_k(x,y):= \sum_{i=0}^k r^{a,b}_i P^{a,b}_i(x)P^{a,b}_i(y).
\label{eq.sum}
\end{equation}
By the Christoffel Darboux formula (see \cite[Section 3.2]{Szego}),

\begin{align}
\label{CD1}Q^{a,b}_k(x,y) & =r^{a,b}_km^{a,b}_k\bigg(\frac{P^{a,b}_{k+1}(x)P^{a,b}_k(y)-P^{a,b}_k(x)P^{a,b}_{k+1}(y)}{x-y}\bigg),\ \ \ \ \ \ \ \ x\neq y\\
\label{CD2}Q^{a,b}_k(x,x) & =r^{a,b}_km^{a,b}_k\left[(P^{a,b}_{k+1})^{'}(x)P^{a,b}_k(x)-(P^{a,b}_k)^{'}(x)P^{a,b}_{k+1}(x)\right],
\end{align}
where $m^{a,b}_i:=l^{a,b}_i/l^{a,b}_{i+1}$ and  $l^{a,b}_i$ is the leading coefficient of $P^{a,b}_i$.\

The following $1/N$-quadrature rule proven in \cite[Theorems 4.1 and 4.2]{Lev} plays an essential role in establishing Theorem \ref{thm.csd bound}.

\begin{theorem}
For $N\in\mathbb{N}$, let $\tau$ be such that $N \in (D(d,\tau),D(d,\tau+1)]$, and let $\alpha_1 = \beta_1 = s$ be the unique solution to

\[N = L(d,s).\]

{\rm(i)} If $\tau  =2k-1$, define nodes $1>\alpha_1>\cdots>\alpha_k> -1$ as the solutions of
\begin{equation}
(t-s)Q_{k-1}^{1,0}(t,s) = 0
\label{nodes}
\end{equation}
with associated weights
\begin{equation}
\rho_i = \frac{\lambda_d^{1,0}}{\lambda_d(1-\alpha_{i})Q^{1,0}_{k-1}(\alpha_i,\alpha_i)}.
    \label{weights}
\end{equation}
Then $\left\{(\alpha_i,\rho_i)\right\}_{i=1}^k$ is a $1/N$-quadrature rule exact on $\Pi_{2k-1}$.

\

{\rm(ii)} If $\tau =2k$, define nodes $1>\beta_1>\cdots>\beta_{k+1} = -1$ as the solutions of
\begin{equation}
(1+t)(t-s)Q_{k-1}^{1,1}(t,s) = 0
\end{equation}
with associated weights
\begin{equation}
\begin{split}
\eta_i & = \frac{\lambda_d^{1,1}}{(1-\beta_i^2)Q^{1,1}_k(\beta_i,\beta_i)}, \ \ \ \ \ \ \ \ i = 1,\ldots,k,\\
\eta_{k+1} & = \frac{Q_k(s,1)}{Q_k(-1,-1)Q_k(s,1)-Q_k(-1,1)Q_k(s,-1)}.
\end{split}
\end{equation}
Then $\left\{(\beta_i,\eta_i)\right\}_{i=1}^{k+1}$ is a $1/N$-quadrature rule exact on $\Pi_{2k}$.
\label{thm.nodeformula}
\end{theorem}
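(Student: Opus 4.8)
The statement is Levenshtein's $1/N$-quadrature theorem from \cite{Lev}, and the plan is to build, for the relevant value of $s$, a Gauss--Radau-type rule for the Gegenbauer weight $\omega_d$ that carries a prescribed point mass $1/N$ at $t=1$ (and, when $\tau$ is even, adjoins a further node at $t=-1$), and then to check that the requirement ``mass at $t=1$ equals $1/N$'' is exactly what forces $N=L(d,s)$. I will treat case (i), $\tau=2k-1$, in detail; case (ii) is the Gauss--Lobatto analogue.

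First, since $L(d,\cdot)$ is continuous and strictly increasing on $[-1,1)$ and satisfies the endpoint identities \eqref{Levendpts}, for $N\in(D(d,2k-1),D(d,2k)]$ the equation $N=L(d,s)$ has a unique solution $s=\alpha_1\in I_{2k-1}$, with $-1<s<1$. Next I would split a generic $f\in\Pi_{2k-1}$ as $f(t)=f(1)+(1-t)r(t)$, $r\in\Pi_{2k-2}$, so that $f_0$ equals $f(1)$ plus a fixed multiple of $\tfrac1{\lambda_d^{1,0}}\int_{-1}^1 r(t)\,\omega_d^{1,0}(t)\,dt$; thus, away from the node $t=1$, the required rule must be the $k$-node Gauss--Radau rule for the weight $\omega_d^{1,0}=(1-t)\omega_d$ with the node $s$ prescribed. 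Since $Q_{k-1}^{1,0}(\cdot,s)$ is, up to the constant $\lambda_d^{1,0}$, the reproducing kernel of $\Pi_{k-1}$ for $\omega_d^{1,0}$, the degree-$k$ polynomial $\pi_k(t):=(t-s)Q_{k-1}^{1,0}(t,s)$ is orthogonal to $\Pi_{k-2}$ against $\omega_d^{1,0}$, hence is the Radau node polynomial, and its zeros are $\alpha_1=s$ together with the $k-1$ zeros of $Q_{k-1}^{1,0}(\cdot,s)$ --- this is \eqref{nodes}. The quadrature weights at these nodes, obtained from the Christoffel--Darboux identities \eqref{CD1}--\eqref{CD2} via the usual Christoffel-number formula, turn out to be exactly those in \eqref{weights}. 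Setting $\rho_0:=1-\sum_{i=1}^k\rho_i$, the rule $f\mapsto\rho_0 f(1)+\sum_{i=1}^k\rho_i f(\alpha_i)$ is then exact on the constant function by construction and on $(1-t)\Pi_{2k-2}$ by the Gauss--Radau exactness, hence on all of $\Pi_{2k-1}$, which is spanned by $1$ and $(1-t)\Pi_{2k-2}$; moreover $\rho_1,\ldots,\rho_k>0$ because $Q_{k-1}^{1,0}(\alpha_i,\alpha_i)>0$ and $\alpha_i<1$. It remains to identify $\rho_0$ with $1/N$.

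This is the crux. Applying the rule (now known exact on $\Pi_{2k-1}$) to $f=\pi_k\in\Pi_k$, which vanishes at every $\alpha_i$ with $i\ge1$, gives $\rho_0=\bigl(\lambda_d\,\pi_k(1)\bigr)^{-1}\int_{-1}^1\pi_k(t)\,\omega_d(t)\,dt$ with $\pi_k(1)=(1-s)Q_{k-1}^{1,0}(1,s)$, and writing $\tfrac{t-s}{1-t}=-1+\tfrac{1-s}{1-t}$ reduces the integral to $-\lambda_d^{1,0}+(1-s)\int_{-1}^1 Q_{k-1}^{1,0}(t,s)\,\omega_d(t)\,dt$. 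The surviving integral is evaluated from the Gegenbauer expansions of the adjacent polynomials $P_i^{1,0}$ (equivalently, the connection formula relating the $(1-t)$-weighted Jacobi polynomials to the Gegenbauer polynomials $P_j$) together with the three-term recurrence, while $Q_{k-1}^{1,0}(1,s)$ is read off from \eqref{CD1}; collecting these and reducing everything to quantities in $P_{k-1}(s)$ and $P_k(s)$ should produce $1/\rho_0=\binom{k+d-2}{k-1}\bigl[\tfrac{2k+d-2}{d}-\tfrac{P_{k-1}(s)-P_k(s)}{(1-s)P_k(s)}\bigr]=L_{2k-1}(d,s)=N$. This identity is precisely what \emph{defines} the Levenshtein function, and performing this bookkeeping in closed form is the main obstacle I anticipate. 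A second delicate point is that $\pi_k$ has $k$ simple zeros in $(-1,1)$ with $s$ the largest: a quasi-orthogonal polynomial of order one may push a zero out of $[-1,1]$, and it is precisely the hypothesis $s\in I_{2k-1}$, i.e.\ $s\ge\gamma_{k-1}^{1,1}$, combined with the interlacing $\gamma_{k-1}^{1,0}<\gamma_{k-1}^{1,1}<\gamma_k^{1,0}$ applied to the Christoffel--Darboux form \eqref{CD1} of $Q_{k-1}^{1,0}(\cdot,s)$, that prevents this and yields the ordering $1>\alpha_1>\ldots>\alpha_k>-1$.

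For $\tau=2k$ one runs the same scheme with $t=-1$ adjoined, peeling off both boundary values: write $f\in\Pi_{2k}$ as the linear polynomial interpolating $f$ at $\{1,-1\}$ plus $(1-t^2)r(t)$, $r\in\Pi_{2k-2}$. The $(1-t^2)r$ part must be integrated against $\omega_d^{1,1}=(1-t^2)\omega_d$, which forces the interior nodes $\beta_1=s,\beta_2,\ldots,\beta_k$ to be the zeros of $(t-s)Q_{k-1}^{1,1}(t,s)$ --- so that $\beta_1,\ldots,\beta_{k+1}$ are the zeros of $(1+t)(t-s)Q_{k-1}^{1,1}(t,s)$ --- with the stated weights $\eta_1,\ldots,\eta_k$, while the two linear test functions supply the remaining two scalar equations: one forces the mass at $t=1$ to be $1/N$ (so that $N=L_{2k}(d,s)$ by the computation above), and the other determines $\eta_{k+1}$ as the displayed $2\times2$ combination of the Gegenbauer kernel values $Q_k(\pm1,\pm1)$ and $Q_k(s,\pm1)$.
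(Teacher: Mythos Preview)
The paper does not contain a proof of this theorem; it is quoted verbatim from Levenshtein's work and attributed there (``proven in \cite[Theorems 4.1 and 4.2]{Lev}''), so there is no in-paper argument to compare against. Your sketch is the standard Gauss--Radau/Gauss--Lobatto construction that Levenshtein himself uses: peel off the boundary node(s), recognize $(t-s)Q_{k-1}^{1,0}(t,s)$ (resp.\ $(t-s)Q_{k-1}^{1,1}(t,s)$) as the quasi-orthogonal node polynomial for the adjacent weight, read off the Christoffel numbers from \eqref{CD1}--\eqref{CD2}, and finally identify the residual weight at $t=1$ with $1/L(d,s)$. The two places you flag as delicate --- the closed-form reduction of $1/\rho_0$ to the Levenshtein function and the verification that all $k$ zeros of $\pi_k$ lie in $(-1,1)$ with $s$ the largest --- are exactly the substantive steps in Levenshtein's proof, and your outline of how to handle them (connection coefficients plus three-term recurrence for the first; interlacing of $\gamma_{k-1}^{1,0}<\gamma_{k-1}^{1,1}<\gamma_k^{1,0}$ combined with the Christoffel--Darboux form for the second) is correct.
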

Here and below $\Pi_{m}$ denotes the collection of all algebraic polynomials of degree
at most $m$.
\begin{remark}
At the endpoints we also have that for $N = D(d,2k)$, $\left\{(\alpha_i,\rho_i)\right\}_{i=1}^k$ is exact on $\Pi_{2k}$ and for $N = D(d,2k+1)$,  $\left\{(\beta_i,\eta_i)\right\}_{i=1}^{k+1}$ is exact on $\Pi_{2k+1}$.
\end{remark}
 The above quadrature rules were used by Boyvalenkov et. al to derive the following universal lower bounds for the energy of spherical configurations.

\begin{theorem}{\rm(\cite{Petersquared})}
Let $N$ be fixed and $h(t)$ denote an absolutely monotone potential on $[-1,1)$. Suppose $\tau = \tau(d,N)$ is such that $N\in (D(d,\tau),D(d,\tau+1)]$ and let $k = \lceil{\frac{\tau+1}{2}}\rceil$. If $\{(\alpha_i,\rho_i)\}_{i=1}^k$ is the $1/N$-quadrature rule of Theorem \ref{thm.nodeformula}, then

\begin{equation}
\mathcal{E}_h(\mathbb{S}^d,N)\geq N^2\sum_{i=1}^k\rho_ih(\alpha_i).
\label{PeterULB}
\end{equation}
\label{thm.hbound}
\end{theorem}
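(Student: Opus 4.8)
The plan is to derive \eqref{PeterULB} from Theorem~\ref{thm.Yudinquad} (equivalently, from Theorem~\ref{thm.del} together with the quadrature of Theorem~\ref{thm.nodeformula}), following the approach of \cite{Petersquared}. Since $\{(\alpha_i,\rho_i)\}_{i=1}^k$ is a $1/N$-quadrature rule exact on $\Lambda=\Pi_\tau$ with positive weights, Theorem~\ref{thm.Yudinquad} gives $\mathcal{E}_h(\mathbb{S}^d,N)\ge N^2\sum_{i=1}^k\rho_i f(\alpha_i)$ for every $f\in\Pi_\tau\cap\mathcal{A}_h$, while the supremum of the right-hand side over such $f$ equals $N^2\sum_{i=1}^k\rho_i h(\alpha_i)$. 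So everything reduces to producing one $f\in\Pi_\tau\cap\mathcal{A}_h$ that interpolates $h$ at all the quadrature nodes, i.e.\ with $f(\alpha_i)=h(\alpha_i)$, $i=1,\dots,k$.

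I would take $f$ to be the Hermite interpolant of $h$ at the nodes. Consider first $\tau=2k-1$: the nodes $1>\alpha_1=s>\alpha_2>\dots>\alpha_k>-1$ are exactly the zeros of $(t-s)Q^{1,0}_{k-1}(\cdot,s)$, and $\dim\Pi_{2k-1}=2k$, so there is a unique $f\in\Pi_{2k-1}$ with $f(\alpha_i)=h(\alpha_i)$ and $f'(\alpha_i)=h'(\alpha_i)$ for $i=1,\dots,k$; this is well defined since $h$, being absolutely monotone, is in $C^\infty([-1,1))$. The case $\tau=2k$ is the same, the endpoint node $-1$ now entering only as a simple interpolation point (the other $k$ nodes remaining double), again matching $\dim\Pi_{2k}$. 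By construction $f$ agrees with $h$ at every node and $\sum_j|f_j|<\infty$ is automatic, so it remains to verify $f\le h$ on $[-1,1]$ and $f_j\ge0$ for $j\ge1$.

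For $f\le h$: the Hermite remainder formula gives, for $t\in[-1,1)$,
\[
h(t)-f(t)=\frac{h^{(2k)}(\xi_t)}{(2k)!}\prod_{i=1}^k(t-\alpha_i)^2
\]
with $\xi_t$ in the convex hull of $\{t,\alpha_1,\dots,\alpha_k\}\subset[-1,1)$ (in the case $\tau=2k$ there is an extra nonnegative factor $1+t$, and $h^{(2k)}$ is replaced by $h^{(2k+1)}$). Absolute monotonicity gives $h^{(2k)}(\xi_t)\ge0$, hence $h\ge f$ on $[-1,1)$; letting $t\to1^-$, continuity of $f$ and $h(1)=\lim_{t\to1^-}h(t)$ (possibly $+\infty$) give $f(1)\le h(1)$.

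The nonnegativity $f_j\ge0$ for $j\ge1$ is the crux and the step I expect to be the main obstacle. The interpolation map $h\mapsto f$ is linear, and the nodes $\alpha_i$ depend only on $d$ and $N$; by a classical theorem of Bernstein, an absolutely monotone $h$ on $[-1,1)$ is real-analytic and admits $h(t)=\sum_{m\ge0}a_m(1+t)^m$ with $a_m\ge0$, the series (and its term-by-term derivative) converging on $[-1,1)$. Hence $f=\sum_m a_m f_m$, where $f_m$ is the Hermite interpolant of $(1+t)^m$, and it suffices to check $(f_m)_j\ge0$ for each $m$. If $m\le\tau$ then $f_m=(1+t)^m$, and $(1+\langle x,y\rangle)^m$ is a positive-definite kernel on $\mathbb{S}^d$ (a Hadamard power of the positive-definite kernel $1+\langle x,y\rangle$), so by Schoenberg's theorem its Gegenbauer coefficients are nonnegative. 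If $m>\tau$ then $f_m$ is no longer a power of $1+t$; here one uses the explicit description of the interpolation nodes as the zero set of $(t-s)Q^{1,0}_{k-1}(\cdot,s)$ (resp.\ $(1+t)(t-s)Q^{1,1}_{k-1}(\cdot,s)$) and, via the Christoffel--Darboux identity \eqref{CD1} and repeated use of the nonnegativity of the linearization coefficients of the Gegenbauer polynomials (i.e.\ iterated Schoenberg positivity), shows that $f_m$ still expands with nonnegative Gegenbauer coefficients. This last computation is the genuinely delicate part and is the one carried out in \cite{Petersquared}. Granting it, $f\in\Pi_\tau\cap\mathcal{A}_h$ interpolates $h$ at all nodes, and Theorem~\ref{thm.Yudinquad} yields \eqref{PeterULB}.
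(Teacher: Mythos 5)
The paper does not prove this theorem at all: it is imported verbatim from \cite{Petersquared}, so the only ``paper proof'' to compare against is the one in that reference. Your skeleton is indeed the same as the one used there: build the Hermite interpolant $f\in\Pi_\tau$ of $h$ at the Levenshtein nodes (double nodes at the $\alpha_i$, the node $-1$ simple in the even case), get $f\le h$ from the interpolation error formula together with absolute monotonicity, and then feed $f$ into Theorem~\ref{thm.del}/Theorem~\ref{thm.Yudinquad}, where exactness of the quadrature on $\Pi_\tau$ and $f(\alpha_i)=h(\alpha_i)$ turn $f_0N^2-f(1)N$ into $N^2\sum_i\rho_i h(\alpha_i)$. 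That part of your write-up is correct and is exactly the standard reduction.

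The genuine gap is the step you yourself flag and then ``grant'': the nonnegativity of the Gegenbauer coefficients $f_j$, $j\ge 1$, of the interpolant. This is not a technical afterthought --- it is the actual content of the theorem (everything else is soft), and your proposed route to it does not work as described. The Bernstein expansion $h(t)=\sum_m a_m(1+t)^m$ with $a_m\ge0$ only helps for $m\le\tau$, where the interpolant of $(1+t)^m$ is $(1+t)^m$ itself and Schoenberg positivity applies; for $m>\tau$ the interpolant is a genuinely different polynomial, and ``iterated Schoenberg positivity'' or nonnegativity of linearization coefficients gives no control on the signs of its Gegenbauer coefficients --- positivity there depends crucially on the nodes being the Levenshtein quadrature nodes, and in \cite{Petersquared} it is established through Levenshtein's machinery (Newton/divided-difference form of the interpolant with nonnegative divided differences, the Christoffel--Darboux kernels $Q^{1,0},Q^{1,1}$ at the special nodes, and the strengthened Krein-type positivity results for the adjacent polynomials), not by decomposing $h$ into powers of $(1+t)$. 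So as a self-contained argument your proposal is incomplete at its central point; as written it amounts to the same citation of \cite{Petersquared} that the paper itself makes, wrapped in a sketch whose final step would have to be replaced by that reference's actual argument.
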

An analogous statement holds for the pairs $(\beta_i,\eta_i)$ of Theorem \ref{thm.nodeformula}(ii), but
we shall not make use of it in our proofs.\

Taking into account Theorem \ref{thm.Yudinquad}, inequality \eqref{PeterULB} provides an optimal linear programming lower bound for the subspace $\Lambda = \Pi_k$. As an application, we now show that Theorem \ref{thm.hbound} recovers the first-order asymptotics for integrable potentials.

\begin{example} If $h(t)$ is any absolutely monotone function that is also integrable with respect to $\omega_d(t)$ on $[-1,1]$, then
\begin{equation}
    \lim_{N\to\infty} \frac{\mathcal{E}_h(\mathbb{S}^d,N)}{N^2} \geq \frac{1}{\lambda_d}\int_{-1}^1 h(t)\omega_d(t)\,dt,
    \label{intasymp}
\end{equation}
where $\lambda_d$ is defined in \eqref{eq.lambdad}.
\end{example}
\begin{remark}It is a classical result of potential theory that the limit exists and equality holds in (\ref{intasymp}); see \cite{Land}.
\end{remark}

\noindent \emph{Proof of \eqref{intasymp}.}
First suppose $h(t)$ is continuous on $[-1,1]$. For $\epsilon > 0$, let $f(t)$ be a polynomial of degree $\leq 2k-1$ such that $|f(t)-h(t)| \leq \epsilon$ uniformly on $[-1,1]$. Setting $(\alpha_0,\rho_0):= (1,1/N)$, we note that the weights $\rho_i$ given in (\ref{weights}) are positive for $i= 0,\ldots, k$ and that $\sum_{i=0}^k\rho_i = 1$. From (\ref{Levquad}), we have with $\alpha_i=\alpha_i(N),\,\rho_i=\rho_i(N), \,k=k(N),$
\begin{align*}
    \bigg\vert\frac{1}{\lambda_d}\int_{-1}^1h(t)\omega_d(t)\,dt&-\sum_{i=0}^k\rho_ih(\alpha_i)\bigg|\\
    & \leq \frac{1}{ \lambda_d}\int_{-1}^1|h(t)-f(t)|\omega_d(t)\,dt + \sum_{i=0}^k\rho_i|f(\alpha_i)-h(\alpha_i)|\\
    & \leq 2\epsilon \to 0 \,\,\,{\rm{as}}\,\, N\to\infty.
\end{align*}
Since $\rho_0h(\alpha_0)=h(1)/N \to 0$ as $N\to\infty$, inequality (\ref{intasymp}) follows.\

 Next suppose $h(t)$ is integrable and $g_m\nearrow h$ a sequence of continuous functions increasing to $h$ (for existence, consider $g_m(t):=h((1-1/m)(t+1)-1)$). By the Monotone Convergence Theorem and a similar string of inequalities as above, it follows that
\[\lim_{k\to\infty}\sum_{i=1}^k\rho_ih(\alpha_i) = \frac{1}{ \lambda_d}\int_{-1}^1 h(t)\omega_d(t),\]
which concludes the proof.\\

We remark that another feature of Theorem \ref{thm.nodeformula} is that it includes a best-packing result of Levenshtein \cite{LevPacking},\cite{LevBig}, which asserts the following: if $\omega_N=\left\{x_1,\ldots, x_N\right\}$
is any $N$-point configuration on $\mathbb{S}^d$ and $\delta(\omega_N):=\max_{i\neq j}\langle x_i,x_j\rangle$, then
\begin{equation}
\delta(\omega_N)\geq \alpha_1,
\label{eq.spheresep}
\end{equation}
where $\alpha_1=\alpha_1(N)$ is as given in Theorem \ref{thm.nodeformula}. This follows by considering
absolutely monotone approximations to the potential

%
\[
h(t) = \left\{
\begin{array}{ll}
   \infty & \textup{ if } t\geq\alpha_1\\

    0 & \textup{ if } t<\alpha_1.
\end{array}\right.
\] Indeed, if  $\delta(\omega_N)<\alpha_1$, then $E_{h}(\omega_N)=0$, but $\sum_{i=1}^k\rho_ih(\alpha_i)=\infty$, contradicting \eqref{PeterULB}.


\section{Proofs of Theorems \ref{thm.csd bound}, \ref{thm.cohnire}, and Proposition~\ref{thm.2tight}}\label{section.csdproofs}

	Our approach will be to find the asymptotic expansion of the right-hand side of (\ref{PeterULB}) as $N\to \infty$. Throughout this section we assume that  $\alpha,\beta>-1$.  We will make use of the following result from Szeg\H{o} (see \cite[Theorem 8.1.1]{Szego}) adjusted by normalization (\ref{normalization}):

 \begin{theorem} Locally uniformly in the complex $z$-plane,
	\[\lim_{k\to\infty}P^{(\alpha,\beta)}_k\bigg(\cos\frac{z}{k}\bigg) = \lim_{k\to\infty}P^{(\alpha,\beta)}_k\bigg(1-\frac{z^2}{2k^2}\bigg) = \Gamma(\alpha+1)\bigg(\frac{z}{2}\bigg)^{-\alpha}J_{\alpha}(z).\]
	\label{thm.szego}
\end{theorem}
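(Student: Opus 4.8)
The plan is to prove this Mehler--Heine type formula directly from the terminating hypergeometric series for Jacobi polynomials. With the normalization $P^{(\alpha,\beta)}_k(1)=1$ adopted here,
\[
P^{(\alpha,\beta)}_k(x)={}_2F_1\left(-k,\,k+\alpha+\beta+1;\,\alpha+1;\,\tfrac{1-x}{2}\right)=\sum_{m=0}^{k}\frac{(-k)_m\,(k+\alpha+\beta+1)_m}{(\alpha+1)_m\,m!}\left(\frac{1-x}{2}\right)^{m},
\]
where $(a)_m:=a(a+1)\cdots(a+m-1)$ and the identity $P^{(\alpha,\beta)}_k(1)=1$ is visible from the value of ${}_2F_1$ at argument $0$. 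Substituting $x=1-z^2/(2k^2)$ turns $(1-x)/2$ into $z^2/(4k^2)$, so the $m$-th summand equals
\[
\frac{(-k)_m\,(k+\alpha+\beta+1)_m}{k^{2m}}\cdot\frac{1}{(\alpha+1)_m\,m!}\left(\frac{z}{2}\right)^{2m},
\]
and since $(-k)_m(k+\alpha+\beta+1)_m/k^{2m}\to(-1)^m$ for each fixed $m$, the natural candidate for the limit is the entire function $\sum_{m\ge 0}\frac{(-1)^m}{(\alpha+1)_m\,m!}\left(\frac{z}{2}\right)^{2m}$.

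The crux is to justify passing $k\to\infty$ inside the sum, locally uniformly in $z$. I would use the elementary bounds $|(-k)_m|=k(k-1)\cdots(k-m+1)\le k^m$, and, for $k$ large and $0\le m\le k$, $|(k+\alpha+\beta+1)_m|\le(3k)^m$, together with $(\alpha+1)_m\ge(\alpha+1)\,(m-1)!$ for $m\ge1$ (which holds since $\alpha>-1$). These show that on any disk $|z|\le R$ the $m$-th summand is dominated, uniformly in all large $k$, by $C_R\,R^{2m}/\big((m-1)!\,m!\big)$, a summable sequence. Tannery's theorem (dominated convergence for series) then gives convergence of the finite sums $\sum_{m=0}^{k}$ to $\sum_{m\ge0}\frac{(-1)^m}{(\alpha+1)_m\,m!}\left(\frac{z}{2}\right)^{2m}$, and the uniform domination makes the convergence locally uniform in the complex $z$-plane, so the limit is entire. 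Using $(\alpha+1)_m=\Gamma(\alpha+m+1)/\Gamma(\alpha+1)$ and the power series $J_\alpha(z)=(z/2)^{\alpha}\sum_{m\ge0}(-1)^m(z/2)^{2m}/\big(m!\,\Gamma(\alpha+m+1)\big)$, this sum is precisely $\Gamma(\alpha+1)(z/2)^{-\alpha}J_\alpha(z)$, the asserted value.

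For the first of the two displayed limits I would run the identical computation with $x=\cos(z/k)$, for which $(1-x)/2=\sin^2\big(z/(2k)\big)$ and $k^2\sin^2\big(z/(2k)\big)\to(z/2)^2$ locally uniformly; since $k^2\sin^2\big(z/(2k)\big)$ stays bounded on $|z|\le R$, the same domination applies verbatim, so both substitutions produce the same entire limit. The only genuinely delicate point is the uniform-in-$k$ tail estimate needed to invoke Tannery's theorem; after that everything is bookkeeping with Pochhammer symbols and the Bessel series. A conceptually different route --- rescaling Jacobi's differential equation near the endpoint $x=1$ and recognizing Bessel's equation in the limit --- is also available, but making the convergence of solutions rigorous near the singular point is more delicate than this series argument, so I would favor the approach above.
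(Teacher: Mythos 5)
Your proposal is correct, but note that the paper does not prove this statement at all: it is imported verbatim from Szeg\H{o} (Theorem 8.1.1 of \cite{Szego}), only rescaled to match the normalization (\ref{normalization}), i.e.\ divided by $\binom{k+\alpha}{k}$, which is where the factor $\Gamma(\alpha+1)$ comes from. So you have supplied a self-contained proof where the authors supply a citation. Your argument is essentially the classical Mehler--Heine proof: the representation $P^{(\alpha,\beta)}_k(x)={}_2F_1\left(-k,\,k+\alpha+\beta+1;\,\alpha+1;\,\tfrac{1-x}{2}\right)$ is exactly the $P_k(1)=1$ normalization used in the paper, the termwise limit $(-k)_m(k+\alpha+\beta+1)_m/k^{2m}\to(-1)^m$ identifies the Bessel series, and your domination bounds are sound: $|(-k)_m|\le k^m$ for $m\le k$, $|(k+\alpha+\beta+1)_m|\le (3k)^m$ for $k$ large, and $(\alpha+1)_m>(\alpha+1)(m-1)!$, which uses $\alpha>-1$ --- precisely the standing assumption in Section~3 --- so Tannery's theorem (equivalently, the Weierstrass $M$-test applied uniformly in $k$) legitimately gives locally uniform convergence, and the local uniformity also disposes of the $\cos(z/k)$ version once you note $k^2\sin^2\bigl(z/(2k)\bigr)\to (z/2)^2$ and stays bounded on $|z|\le R$ (e.g.\ via $|\sin w|\le |w|e^{|w|}$). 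The only gain of the paper's route is brevity; the gain of yours is that the asymptotic input driving Corollary~\ref{thm.jacobi zeros}, Lemmas~\ref{lem.subindex}--\ref{lem.aha}, and hence Theorem~\ref{thm.csd bound} becomes self-contained and elementary, at the cost of a page of Pochhammer bookkeeping. Your closing remark is also well taken: the alternative route through rescaling Jacobi's differential equation to Bessel's equation requires a genuine compactness/uniqueness argument near the singular endpoint, so the series argument is the right choice here.
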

This gives the following immediate corollary:

\begin{corollary}

If $-1< \gamma_{k,k}<\dots<\gamma_{k,1}<1$ are the zeros of $P^{(\alpha,\beta)}_k$ and $z_i$ is the $i$-th smallest positive zero of the Bessel function $J_\alpha(z)$, then
\begin{equation}
\lim_{k\to \infty} k\cos^{-1}(\gamma_{k,i}) = z_{i}.
\label{eq.jacobi zeros}
\end{equation}
\label{thm.jacobi zeros}
\end{corollary}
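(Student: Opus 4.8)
The plan is to upgrade the locally uniform convergence supplied by Theorem~\ref{thm.szego} to convergence of zeros by a Hurwitz-type argument, combined with a short zero-counting bookkeeping that pins down \emph{which} Jacobi zero tracks $z_i$.

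First I would package the asymptotics as a statement about entire functions. Set $g_k(z):=P_k^{(\alpha,\beta)}\!\big(\cos(z/k)\big)$. Each $g_k$ is entire and even, satisfies $g_k(0)=P_k^{(\alpha,\beta)}(1)=1$, and—since the equation $\cos w=\gamma$ with $\gamma\in(-1,1)$ has only the real solutions $w=\pm\cos^{-1}\gamma+2\pi n$—has \emph{only real} zeros, namely the numbers $\pm k\cos^{-1}(\gamma_{k,j})+2\pi k n$ for $1\le j\le k$, $n\in\mathbb Z$. By Theorem~\ref{thm.szego}, $g_k\to g$ locally uniformly on $\mathbb C$, where $g(z):=\Gamma(\alpha+1)(z/2)^{-\alpha}J_\alpha(z)$. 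The key structural remark is that $(z/2)^{-\alpha}J_\alpha(z)=\sum_{m\ge 0}(-z^2/4)^m/(m!\,\Gamma(\alpha+m+1))$, so $g$ extends to an even entire function with $g(0)=1$, and (by classical properties of $J_\alpha$ for $\alpha>-1$) its full zero set is $\{\pm z_\ell\}_{\ell\ge 1}$, all zeros being simple.

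Next I would count. Fix $i$ and pick $R\in(z_m,z_{m+1})$ for some $m\ge i$, so that $g$ has no zero on $|z|=R$ and exactly $2m$ zeros, all simple, in the open disk $D(0,R)$. By Hurwitz's theorem, for all large $k$ the function $g_k$ has exactly $2m$ zeros in $D(0,R)$, and for any prescribed small $\varepsilon>0$ (with the disks $D(\pm z_\ell,\varepsilon)$ disjoint and contained in $D(0,R)$) exactly one zero of $g_k$ lies in each $D(\pm z_\ell,\varepsilon)$, $1\le\ell\le m$, for $k$ large. On the other hand, once $k>R/\pi$ the only zeros of $g_k$ inside $D(0,R)$ are the real numbers $\pm k\cos^{-1}(\gamma_{k,j})$ with $k\cos^{-1}(\gamma_{k,j})<R$ (the ``$n\ne 0$'' zeros all have modulus at least $\pi k>R$), and these are simple; because $\cos^{-1}$ is decreasing, they are exactly the ones with $j=1,\dots,m_k$ for some integer $m_k$, so $g_k$ has exactly $2m_k$ zeros in $D(0,R)$. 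Comparing the two counts forces $m_k=m$ for all large $k$.

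Finally I would match orders. For $k$ large the positive zeros of $g_k$ in $(0,R)$, listed increasingly, are $0<k\cos^{-1}(\gamma_{k,1})<\dots<k\cos^{-1}(\gamma_{k,m})<R$; since the $2m$ zeros of $g_k$ in $D(0,R)$ are exhausted by the disks $D(\pm z_\ell,\varepsilon)$, the $i$-th of these positive zeros must lie in $D(z_i,\varepsilon)$ (both lists being increasing, the matching is forced). Hence $|k\cos^{-1}(\gamma_{k,i})-z_i|<\varepsilon$ for all large $k$, and as $\varepsilon>0$ and $m\ge i$ were arbitrary, \eqref{eq.jacobi zeros} follows. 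The whole argument is routine once Theorem~\ref{thm.szego} is in hand; the only spots needing a little care are noting that $(z/2)^{-\alpha}J_\alpha(z)$ is entire (so Hurwitz applies on a full disk about the origin, which is not a zero of $g$) and the order-preserving matching that singles out the $i$-th largest Jacobi zero rather than some other zero. I expect the zero-count bookkeeping to be the fussiest step, but it is not deep.
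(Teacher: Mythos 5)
Your proof is correct and follows the standard route that the paper (and its reference, Szeg\H{o}, where this zero convergence is Theorem 8.1.2, deduced from Theorem 8.1.1) implicitly relies on: locally uniform convergence of $P_k^{(\alpha,\beta)}(\cos(z/k))$ to the entire function $\Gamma(\alpha+1)(z/2)^{-\alpha}J_\alpha(z)$ plus Hurwitz's theorem, with the zero-counting and order-matching bookkeeping done carefully. The paper states the corollary as immediate, so your write-up simply supplies the details of that same argument; no gaps.
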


Recalling definition \eqref{eq.adjacent jacobi} and making use of well-known properties of the derivatives, norms, and leading coefficients of the Jacobi polynomials (see, e.g., \cite[Chapter 4]{Szego})
we obtain the following asymptotic formulas as $k\to\infty$:
\begin{equation}
	\begin{split}
	    \frac{\textup{d}}{\textup{d}t}P^{1,0}_k(t) &= \frac{1}{2}(k+d)\frac{\binom{k+\frac{d+2}{2}}{k}}{\binom{k+\frac{d}{2}}{k}}P^{2,1}_{k-1}(t)\\
	    &=\bigg(\frac{k^2}{d+2}+o(k^2)\bigg)P^{2,1}_{k-1}(t).
	    \label{eq.jacobi derivative}
	\end{split}
	\end{equation}
Furthermore,
	
	\begin{equation}
	\begin{split}
	   \frac{r^{1,0}_k}{\lambda_d^{1,0}} & = \bigg(\int_{-1}^1(P^{1,0}_k(t))^2\omega^{1,0}(t)\,dt\bigg)^{-1}\\
	    & = \Bigg(\frac{2^d\Gamma(k+\frac{d+2}{2})\Gamma(k+\frac{d}{2})}{\binom{k+\frac{d}{2}}{k}^2(2k+d)\Gamma(k+d)\Gamma(k+1)}\Bigg)^{-1}\\
	    & = \frac{k^{d+1}}{2^{d-1}\Gamma(\frac{d+2}{2})^2}+o(k^{d+1}).
	\end{split}
	\label{eq.jacnorms}
	\end{equation}
Lastly, recalling that  $l^{1,0}_k$ is the leading coefficient of $P^{1,0}_k(t)$,
\[l^{1,0}_k = \frac{\Gamma(2k+d)}{\binom{k+\frac{d}{2}}{k}2^k\Gamma(k+1)\Gamma(k+d)},\]
which yields for the  ratio

	\begin{equation}
	\begin{split}
	    m^{1,0}_k  = \frac{l^{1,0}_k}{l^{1,0}_{k+1}} & = \bigg(\frac{2(k+1)(k+d)}{(2k+d+1)(2k+d)}\bigg)\bigg(\frac{2k+2+d}{2k+2}\bigg)\\
	    & = \frac{1}{2}+o(1).
	\end{split}
	\label{eq.coeff ratio}
	\end{equation}
\begin{remark}Generalizing equations (\ref{eq.jacobi derivative}) - (\ref{eq.coeff ratio}) to $P^{(\alpha,\beta)}_k(t)$ we obtain
\begin{equation}
\frac{\textup{d}}{\textup{d}t}P^{(\alpha,\beta)}_k(t) = \bigg(\frac{k^2}{2(\alpha+1)}+o(k^2)\bigg)P^{(\alpha+1,\beta+1)}_{k-1}(t),
\label{eq.jac deriv general}
\end{equation}
\begin{equation}
r^{(\alpha, \beta)}_k = O(k^{2\alpha+1}),\ \textup{and}\ \ \ \ \ \ \ \ \ \ \ \ \ \ \ \ \ \ \ \ \ \ \ \ \ \ \ \ \ \ \ \
\end{equation}
\begin{equation}
m^{(\alpha,\beta)}_k = \frac{1}{2}+o(1).\ \ \ \ \ \ \ \ \ \ \ \ \ \ \ \ \ \ \ \ \ \ \ \ \ \ \ \ \ \ \ \ \ \ \ \ \ \ \
\end{equation}
\end{remark}
	We also need the following additional lemmas.

\begin{lemma}
 Let $p_k(t) := P^{(\alpha,\beta)}_k(t)$ be a sequence of Jacobi polynomials. If $z\in \mathbb{R}$ is fixed such that $\displaystyle\lim_{k\to\infty}p_k(\cos\frac{z}{k}) = c $ and $\beta_k,\,-1\leq \beta_k\leq 1,$ is a sequence satisfying
\begin{equation}
\lim_{k\to\infty}k\cos^{-1}(\beta_k) = z,
\label{betalim}
\end{equation}
then
\begin{equation}
\lim_{k\to\infty}p_{k+ j}(\beta_k) = c,
\label{lem.eq.subindex}
\end{equation}
for any fixed $j\in \mathbb{Z}$.
\label{lem.subindex}
\end{lemma}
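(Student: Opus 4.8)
The plan is to reduce the claim to Szeg\H{o}'s asymptotic formula (Theorem~\ref{thm.szego}) applied along the shifted degree index $m=k+j$, after rewriting the argument $\beta_k$ in the canonical form $\cos(w_m/m)$ with $w_m\to z$.

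First I would set $\theta_k:=\arccos\beta_k\in[0,\pi]$, so that $\beta_k=\cos\theta_k$ and, by hypothesis \eqref{betalim}, $k\theta_k\to z$; in particular $\theta_k\to 0$. Then for any fixed $j\in\mathbb{Z}$ one has $w_k:=(k+j)\theta_k=k\theta_k+j\theta_k\to z$, while trivially $\cos(w_k/(k+j))=\cos\theta_k=\beta_k$. Hence
\[
p_{k+j}(\beta_k)=P^{(\alpha,\beta)}_{k+j}\!\Bigl(\cos\frac{w_k}{k+j}\Bigr),\qquad w_k\to z .
\]

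Next I would invoke Theorem~\ref{thm.szego} in the form: the functions $g_m(w):=P^{(\alpha,\beta)}_m(\cos(w/m))$ converge, locally uniformly in $w\in\mathbb{C}$, to the entire function $g(w):=\Gamma(\alpha+1)(w/2)^{-\alpha}J_\alpha(w)$. Specializing to $w=z$ with degree index $k$ gives $g_k(z)\to g(z)$, and comparing with the hypothesis $p_k(\cos(z/k))=g_k(z)\to c$ identifies $c=g(z)$ (alternatively one may keep $c$ abstract and use only that the limit function is continuous). Finally, fixing a closed disk $D$ centered at $z$, for all large $k$ we have $w_k\in D$, so
\[
|p_{k+j}(\beta_k)-c|=|g_{k+j}(w_k)-g(z)|\le \sup_{w\in D}|g_{k+j}(w)-g(w)|+|g(w_k)-g(z)| ,
\]
and both terms tend to $0$ — the first by the locally uniform convergence $g_m\to g$ on $D$ (along $m=k+j\to\infty$), the second by continuity of $g$ at $z$. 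This yields \eqref{lem.eq.subindex}.

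The only genuine point is the index shift: one must notice that $\cos\theta_k$ is simultaneously of the form $\cos(w_k/(k+j))$ with $w_k=(k+j)\theta_k\to z$, which lets one run Szeg\H{o}'s asymptotic at degree $k+j$ instead of $k$. Everything else is a routine $\varepsilon$-argument combining local uniform convergence with continuity of the limit, so I expect no real obstacle beyond bookkeeping; in particular no analytic input beyond Theorem~\ref{thm.szego} is required, and the case $j<0$ causes no trouble since only the regime $k\to\infty$ matters.
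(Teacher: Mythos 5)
Your proof is correct, but it follows a genuinely different route from the paper's. You fold the perturbation into the Mehler--Heine scaling itself: writing $\beta_k=\cos\bigl(w_k/(k+j)\bigr)$ with $w_k=(k+j)\arccos\beta_k\to z$, you get the conclusion from the \emph{locally uniform} convergence asserted in Theorem~\ref{thm.szego} plus continuity of the entire limit function $g(w)=\Gamma(\alpha+1)(w/2)^{-\alpha}J_\alpha(w)$; as a byproduct this identifies $c=g(z)$. The paper instead first reduces to $j=0$ by the index substitution and then runs a quantitative mean value theorem argument: it notes $|\beta_k-\cos(z/k)|=o(1/k^2)$, bounds $p_k'$ via the derivative identity \eqref{eq.jac deriv general} (giving growth $O(k^2)$), and invokes uniform boundedness of the normalized Jacobi polynomials on $[-1,1]$. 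Your argument is leaner in that it needs no derivative formula and no uniform-boundedness citation, relying only on the stated locally uniform convergence and a two-term $\varepsilon$ estimate; the paper's MVT technique, on the other hand, is the same quantitative machinery it reuses immediately afterwards in Lemma~\ref{lem.aha}, where pointwise limits alone would not suffice, so its extra inputs are not wasted in context. Minor bookkeeping points in your write-up (that $w_k$ is real, that $z\ge 0$ automatically, and that $m=k+j\to\infty$ even for $j<0$) are all handled correctly.
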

\begin{proof}

First, since $\displaystyle\lim_{k\to\infty}(k+ j)\cos^{-1}(\beta_k) = \lim_{k\to\infty}k\cos^{-1}(\beta_k)$, by making the substitution $k = k+ j$ it suffices to establish equation (\ref{lem.eq.subindex}) for the case $j=0$. From (\ref{betalim}), we have that
\[\epsilon_k:= |\beta_k-\cos\frac{z}{k}| = o\bigg(\frac{1}{k^2}\bigg).\]
 Applying the mean value theorem, equation (\ref{eq.jac deriv general}), and using the fact that $p_k$ is uniformly bounded in $k$ on $[-1,1]$ (see e.g. \cite{ErdMagNev}) we get with $p_{k-1}^{1,1}:=P_{k-1}^{\alpha+1\,\beta+1},$
\[|p_k(\beta_k)-p_k(\cos\frac{z}{k})|  = p_k'(\xi_k)\epsilon_k = k^2\tilde{c}p_{k-1}^{1,1}(\xi_k)\epsilon_k = o(1),\]
for some $\xi_k$ between $\beta_k$ and $\cos(\frac{z}{k})$, and $\tilde{c}>0$.
\end{proof}

A stronger version of Lemma \ref{lem.subindex} holds when $c=0$.

\begin{lemma}  Let $-1< \gamma_{k,k}<\dots<\gamma_{k,1}< 1$ be the zeros of $p_k(t):= P^{(\alpha,\beta)}_k(t)$, and denote by $z_{i}$  the $i$-th smallest positive zero of the Bessel function $J_\alpha(z)$. Then for all $i = 1,2,\ldots$,

\[\lim_{k\to\infty}kp_{k-1}(\gamma_{k,i}) = 2\Gamma(\alpha+1)\bigg(\frac{z_{i}}{2}\bigg)^{-\alpha+1}J_{\alpha+1}(z_{i}).\]

\label{lem.aha}
\end{lemma}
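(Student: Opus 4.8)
The plan is to reduce the statement to the asymptotic ingredients already collected in this section by means of a classical differential identity for Jacobi polynomials, and then simplify. Recall (e.g.\ from the three-term recurrence together with the derivative relation, see \cite[Ch.~4]{Szego}) the identity, valid in the normalization $P_k^{(\alpha,\beta)}(1)=\binom{k+\alpha}{k}$,
\[
(2k+\alpha+\beta)(1-t^2)\,\frac{d}{dt}P_k^{(\alpha,\beta)}(t)
= k\big[\alpha-\beta-(2k+\alpha+\beta)t\big]P_k^{(\alpha,\beta)}(t)+2(k+\alpha)(k+\beta)P_{k-1}^{(\alpha,\beta)}(t).
\]
Evaluating at $t=\gamma_{k,i}$, where $P_k^{(\alpha,\beta)}(\gamma_{k,i})=0$ regardless of normalization, the first term on the right vanishes. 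Rewriting both sides in the normalization \eqref{normalization} used here (so that $p_k=P_k^{(\alpha,\beta)}$), the binomial ratio $\binom{k+\alpha}{k}/\binom{k-1+\alpha}{k-1}=(k+\alpha)/k$ enters, and one obtains the exact relation
\[
p_{k-1}(\gamma_{k,i})=\frac{(2k+\alpha+\beta)\,(1-\gamma_{k,i}^{\,2})}{2k\,(k+\beta)}\;p_k'(\gamma_{k,i}).
\]

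Next I would analyze the right-hand side as $k\to\infty$. From Corollary~\ref{thm.jacobi zeros}, $k\cos^{-1}(\gamma_{k,i})\to z_i$, whence $\gamma_{k,i}=\cos\!\big(z_i/k+o(1/k)\big)$ and therefore $k^2(1-\gamma_{k,i}^{\,2})\to z_i^{\,2}$, while $(2k+\alpha+\beta)/\big(2(k+\beta)\big)\to1$. For the derivative, \eqref{eq.jac deriv general} gives $p_k'(\gamma_{k,i})=\big(\tfrac{k^2}{2(\alpha+1)}+o(k^2)\big)\,P_{k-1}^{(\alpha+1,\beta+1)}(\gamma_{k,i})$. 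To evaluate $\lim_k P_{k-1}^{(\alpha+1,\beta+1)}(\gamma_{k,i})$ I would invoke Lemma~\ref{lem.subindex} with parameter pair $(\alpha+1,\beta+1)$, sequence $\beta_k=\gamma_{k,i}$, point $z=z_i$, and shift $j=-1$ — this is legitimate because the lemma only asks that $k\cos^{-1}(\beta_k)\to z$, which holds here even though the $\gamma_{k,i}$ are zeros of $P_k^{(\alpha,\beta)}$ rather than of $P_k^{(\alpha+1,\beta+1)}$ — together with Theorem~\ref{thm.szego}, which gives $\lim_k P_k^{(\alpha+1,\beta+1)}\!\big(\cos(z_i/k)\big)=\Gamma(\alpha+2)(z_i/2)^{-(\alpha+1)}J_{\alpha+1}(z_i)$. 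Hence $P_{k-1}^{(\alpha+1,\beta+1)}(\gamma_{k,i})$ converges to this same value.

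Finally, multiplying the exact relation by $k$ and passing to the limit gives
\[
\lim_{k\to\infty}k\,p_{k-1}(\gamma_{k,i})=\frac{z_i^{\,2}}{2(\alpha+1)}\;\Gamma(\alpha+2)\Big(\frac{z_i}{2}\Big)^{-(\alpha+1)}J_{\alpha+1}(z_i),
\]
and, using $\Gamma(\alpha+2)=(\alpha+1)\Gamma(\alpha+1)$ and $(z_i/2)^{-(\alpha+1)}=\tfrac{2}{z_i}(z_i/2)^{-\alpha}$, this simplifies to $z_i\,\Gamma(\alpha+1)(z_i/2)^{-\alpha}J_{\alpha+1}(z_i)=2\,\Gamma(\alpha+1)(z_i/2)^{-\alpha+1}J_{\alpha+1}(z_i)$, which is the claimed limit. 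There is no analytic obstacle here; the step that requires care is purely bookkeeping — transcribing the differential identity in the right normalization and confirming that the hypotheses of Lemma~\ref{lem.subindex} hold for the sequence $\gamma_{k,i}$ — so that all the constants assemble into precisely the stated Bessel expression.
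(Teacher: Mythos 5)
Your argument is correct, and it takes a genuinely different route from the paper's proof. The paper argues analytically: it Taylor-expands $p_{k-1}$ about its own zero $\gamma_{k-1,i}$, using the interlacing of zeros and the spacing asymptotics $\gamma_{k,i}-\gamma_{k-1,i}=z_i^2/k^3+o(1/k^3)$ extracted from Corollary~\ref{thm.jacobi zeros}, then discards the higher Taylor terms via the bounds $p_k^{(j)}=O(k^{2j})$, and finally evaluates the surviving term with \eqref{eq.jac deriv general}, Lemma~\ref{lem.subindex} and Theorem~\ref{thm.szego} --- exactly as in your last step. You instead start from the exact differential--recurrence identity of \cite[Ch.~4]{Szego}, which at a zero of $P_k^{(\alpha,\beta)}$ collapses to a single algebraic relation between $p_{k-1}(\gamma_{k,i})$ and $p_k'(\gamma_{k,i})$; your renormalization is right (the binomial ratio $(k+\alpha)/k$ cancels the factor $k+\alpha$, giving the stated exact formula), and your use of Lemma~\ref{lem.subindex} with parameters $(\alpha+1,\beta+1)$, shift $j=-1$, and the sequence $\gamma_{k,i}$ is legitimate, since the lemma only requires $k\cos^{-1}(\gamma_{k,i})\to z_i$, not that the $\gamma_{k,i}$ be zeros of that particular family. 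What your route buys is the removal of the most delicate point of the paper's argument --- the two-nearby-zeros spacing estimate and the Taylor-remainder control --- replacing it by the single limit $k^2(1-\gamma_{k,i}^2)\to z_i^2$, which follows directly from Corollary~\ref{thm.jacobi zeros}; what it costs is importing and correctly transcribing a classical identity into the normalization \eqref{normalization}, which you have done, and the final simplification via $\Gamma(\alpha+2)=(\alpha+1)\Gamma(\alpha+1)$ indeed reproduces the stated constant.
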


\begin{proof}
	By Corollary \ref{thm.jacobi zeros},
	\[\gamma_{k,i} = 1-\frac{z_{i}^2}{2k^2} + o\bigg(\frac{1}{k^2}\bigg),\]
	which implies
	\[\delta_k:=|\gamma_{k,i}-\gamma_{k-1,i}| = \frac{z_{i}^2}{k^3}+o\bigg(\frac{1}{k^3}\bigg),\,\,\mathrm{as}\,\, k\to \infty.\]
	By the interlacing properties of the zeros of Jacobi polynomials, we see that $\gamma_{k,i}>\gamma_{k-1,i}$ and we can drop the absolute value in $\delta_k$. Expanding the Taylor series for $p_{k-1}(t)$ around the zero $\gamma_{k-1,i}$,  we have
	
	\[kp_{k-1}(\gamma_{k,i}) = k\delta_kp'_{k-1}(\gamma_{k-1,i}) + \frac{k\delta_k^2p^{''}_{k-1}(\gamma_{k-1,i})}{2}+\cdots\]
	
	Each successive derivative term beyond the first has order $o(1)$ since by repeated application of \eqref{eq.jac deriv general} and Lemma \ref{lem.subindex}, $p_k^{(j)}(t) = O(k^{2j})p^{j,j}_{k-j}(t) = O(k^{2j})$ while on the other hand $\delta_k^j = O(1/k^{3j})$. Thus,
	\[kp_{k-1}(\gamma_{k,i}) = \frac{z^2_{i}}{2(\alpha+1)}p^{1,1}_{k-2}(\gamma_{k-1,i})+o(1)\,\, \mathrm{as} \,\, k\to\infty.\]
	Now by Theorem \ref{thm.szego} and Lemma \ref{lem.subindex}, we obtain the result.
\end{proof}
We are now ready to prove the main theorem.
\begin{proof}[\textbf{Proof of Theorem \ref{thm.csd bound}}]

In the case of Riesz energy, we have
\[K_s(x,y) = h_s(\langle x,y\rangle) = (2-2\langle x,y\rangle)^{-s/2}\]
We consider the subsequence
\begin{equation}
N_k := D(d,2k) =  \binom{d+k}{d} + \binom{d+k-1}{d} = \frac{2}{\Gamma(d+1)}k^{d} + o(k^d).
\label{eq.subsequence}
\end{equation}
By Theorem \ref{poppy seed} it suffices to prove

\begin{equation}\lim_{k\to \infty}\frac{\mathcal{E}_s(N_k,\mathbb{S}^d)}{N_k^{1+s/d}} \geq \frac{A_{s,d}}{\mathcal{H}_d(\mathbb{S}^d)^{s/d}}
\label{eq.new bound}
\end{equation}
where \[\mathcal{H}_d(\mathbb{S}^d) = \frac{2\pi^{\frac{d+1}{2}}}{\Gamma\big(\frac{d+1}{2}\big)}. \]
Along the subsequence $N_k$, from (\ref{Levendpts}), $\alpha_1 = \gamma^{1,0}_{k,1},$ where $\gamma^{1,0}_{k,i}$ is the $i$-th largest zero of $P^{1,0}_k(t)$ and
\begin{align*}(t-\alpha_1)Q^{1,0}_{k-1}(t,\alpha_1) & = r_{k-1}m_{k-1}(P^{1,0}_{k}(t)P^{1,0}_{k-1}(\alpha_1)-P^{1,0}_{k-1}(t)P^{1,0}_k(\alpha_1))\\
& = r_{k-1}m_{k-1}(P^{1,0}_{k}(t)P^{1,0}_{k-1}(\alpha_1));\\
\end{align*}
thus the quadrature nodes are given by
\[\alpha_i = \gamma^{1,0}_{k,i}, \ \ \ \ \ \ \ \ i = 1,2,\ldots, k.\]
For a fixed $m$ and all $k\geq m$ we have by Theorem \ref{thm.hbound}
\[\frac{\mathcal{E}_s(N_k)}{N_k^{1+s/d}}\geq\frac{\sum_{i=1}^k\rho_ih_s(\alpha_i)}{N_k^{-1+s/d}}\geq\frac{\sum_{i=1}^m\rho_ih_s(\alpha_i)}{N_k^{-1+s/d}}.\]\

For a fixed $i\leq m$, we next establish asymptotics for $\rho_ih(\alpha_i)$. By Corollary \ref{thm.jacobi zeros} we have

\begin{equation}
\lim_{k\to \infty} \frac{h_s(\alpha_i)}{k^{s}} = \lim_{k\to \infty} \frac{(2-2\alpha_i)^{-s/2} }{k^s}= (z_{i})^{-s},
\label{eq.halpha}
\end{equation}
and by (\ref{eq.jacobi derivative}) and Lemma \ref{lem.subindex},
\begin{equation}
\lim_{k\to\infty} \frac{(P^{1,0}_k)^{'}(\alpha_i)}{k^2} = \frac{\Gamma(d/2+2)}{d+2}\bigg(\frac{z_{i}}{2}\bigg)^{-\frac{d+2}{2}}J_{d/2+1}(z_{i}).
\label{eq.derivative limit}
\end{equation}
Furthermore, from Lemma \ref{lem.aha}, it follows that

\begin{equation}
\lim_{k\to\infty}kP^{1,0}_{k-1}(\alpha_i) = 2\Gamma(d/2+1)\bigg(\frac{z_{i}}{2}\bigg)^{-\frac{d-2}{2}}J_{d/2+1}(z_{i}).
\label{eq.aha}
\end{equation}
From the weight formula given in equation (\ref{weights}) and the Cristoffel-Darboux formula (\ref{CD2}) we deduce that

\begin{equation}
\lim_{k\to\infty}k^d\rho_i = \lim_{k\to\infty}k^d\left(\frac{\lambda_d}{\lambda_d^{1,0}}(1-\alpha_{i})r^{1,0}_{k-1}m^{1,0}_{k-1}(P^{1,0}_{k})^{'}(\alpha_i)P^{1,0}_{k-1}(\alpha_i)\right)^{-1},
\label{eq.weight limit 1}
\end{equation}
and combining equations (\ref{eq.jacobi zeros}),(\ref{eq.jacnorms}),(\ref{eq.coeff ratio}),(\ref{eq.derivative limit}), and (\ref{eq.aha}), this yields

\begin{equation}
\begin{split}
\lim_{k\to\infty}k^d\rho_i = \ \ \ \ \ \ \ \ \ \ \ \ \ \ \ \ \ \ \ \ \ \ \ \ \ \ & \\
    \bigg[\lambda_d\bigg(\frac{z_{i}^2}{2}\bigg)\bigg(\frac{1}{2^{d-1}\Gamma(d/2+1)^2}\bigg)\frac{1}{2}\bigg(& \frac{\Gamma(d/2+2)}{d+2}\bigg(\frac{z_{i}}{2}\bigg)^{-d/2-1}J_{d/2+1}(z_{i})
\bigg)\\
& \cdot \ 2\Gamma(d/2+1)\bigg(\frac{z_{i}}{2}\bigg)^{-d/2+1}J_{d/2+1}(z_{i}) \bigg]^{-1}.
\end{split}
\label{weight limit 2}
\end{equation}
Simplifying gives
\begin{equation}
\lim_{k\to\infty}k^d\rho_i = \frac{2}{\lambda_d z_{i}^{2-d}\big(J_{d/2+1}(z_{i})\big)^{2}}.
\label{weight limit 3}
\end{equation}
Finally, combining the asymptotics for $N_k$, $h_s(\alpha_i)$, and $\rho_i$, equations (\ref{eq.subsequence}), (\ref{eq.halpha}), and (\ref{weight limit 3}) respectively, we obtain

\begin{equation}
\lim_{k\to\infty}\frac{\rho_ih(\alpha_i)}{N_k^{s/d-1}} = \frac{2}{\lambda_d\big(\frac{2}{\Gamma(d+1)}\big)^{s/d-1}z_{i}^{2-d+s}\big(J_{d/2+1}(z_{i})\big)^{2}},
\label{eq.single term limit}
\end{equation}
and thus
\[\frac{C_{s,d}}{\mathcal{H}_d(\mathbb{S}^d)^{s/d}} = \lim_{k\to\infty}\frac{\mathcal{E}_s(N_k,\mathbb{S}^d)}{N_k^{1+s/d}}\geq \sum_{i=1}^m\frac{2}{\lambda_d\big(\frac{2}{\Gamma(d+1)}\big)^{s/d-1}z_{i}^{2-d+s}\big(J_{d/2+1}(z_{i})\big)^{2}}.\]
Multiplying by $\mathcal{H}_d(\mathbb{S}^d)^{s/d}$ and letting $m\to \infty$ gives (\ref{eq.new bound}) and hence (\ref{eq.Asd}).
\end{proof}

\begin{proof}[\textbf{Proof of Proposition \ref{thm.2tight}}]
We first establish the limit involving $\xi_{s,d}$:
\begin{equation}\label{xilim}
\begin{split}
\lim_{s\to d^+}(s-d)\xi_{s,d}&=\lim_{s\to d^+}d\bigg[\frac{\pi^{d/2}\Gamma(1+\frac{s-d}{2})}{\Gamma(1+\frac{s}{2})}\bigg]^{s/d}
=\frac{d\pi^{d/2}}{\Gamma(1+\frac{d}{2})}=\frac{2\pi^{d/2}}{\Gamma(\frac{d}{2})}.
\end{split}
\end{equation}

If $\Lambda$ is a $d$-dimensional lattice with co-volume $|\Lambda|>0$ then it is known (see \cite{Terras}) that the Epstein zeta function has a simple pole at $s=d$ with residue
\begin{equation}
\frac{2\pi^{d/2}}{\Gamma(d/2)|\Lambda |}.\end{equation}
Proposition~\ref{bhsbound}, the bound  \eqref{eq.lattice bound}, and \eqref{xilim} then show
\begin{equation}
\lim_{s\to d^+} (s-d)C_{s,d} = \frac{2\pi^{d/2}}{\Gamma(\frac{d}{2})}.
\end{equation}

Finally, we establish the limit involving $A_{s,d}$.   The well-known asymptotic behavior of $J_{\frac{d}{2}+1}(z)$ \cite{Szego}, as $z\to\infty$, is given by
 \begin{equation}
 J_{\frac{d}{2}+1}(z) = -\sqrt{\frac{2}{\pi z}}\big(\cos\big(z-(d-3)\frac{\pi}{4}\big)+O\big(z^{-3/2}\big)\big)
 \label{asympBessel}
 \end{equation}
 and $z_n$, the $n$-th zero of the $J_{\frac{d}{2}}(z)$, is given by (see \cite{Bessel})
 \begin{equation}
 z_n = n\pi+(d-1)\frac{\pi}{4}+O(n^{-1}).
 \label{asmpyBeszero}
 \end{equation}
Thus,
$$
J_{\frac{d}{2}+1}(z_n)^{-2}=\frac{\pi z_n}{2}+O(n^{-1}),
$$
and so we have
\[\sum_{n=1}^\infty \frac{1}{z_n^{s-d+1}J_{\frac{d}{2}}(z_n)^2} = \frac{\pi}{2}\sum_{n=1}^\infty\frac{1}{z_n^{s-1}+a_n} = \frac{1}{2\pi^{s-d}}\sum_{n=1}^\infty\frac{1}{(n+(d-1)/4+b_n)^{s-1}+a_n}\]
where $a_n$, $b_n  =o(1)$. As $s\to d^+$, this sum approaches the Hurwitz zeta function, $\zeta(s-d+1,(d+3)/4),$ where
\begin{equation}
\zeta(s,q) := \sum_{n=0}^\infty \frac{1}{(n+q)^{s}}.
\label{Hurzeta}
\end{equation}
That is,
\begin{equation}
\lim_{s\to d^+} \frac{\sum_{n=1}^\infty ((n+(d-1)/4+b_n)^{s-d+1}+a_n)^{-1}}{\zeta(s-d+1,(d+3)/4)} = 1.
\label{eq.Hurlimit}
\end{equation}
Indeed, suppose $a = \sup |a_n|$ and $b = \sup |b_n|$. Then,
\begin{align*}
    &\sum_{n=1}^\infty\frac{1}{(n+(d-1)/4+b_n)^{s-d+1}+a_n} \geq\sum_{n=1}^\infty\frac{1}{(n+(d-1)/4+b)^{s-d+1}+a}\\ &\geq\sum_{n=1}^\infty\frac{1}{(n+(d-1)/4+b+a)^{s-d+1}}  = \sum_{n=0}^\infty\frac{1}{(n+(d+3)/4+b_n)^{s-d+1}+a_n} \\& = \zeta(s-d+1,(d+3)/4+a+b),
    \end{align*}
and similarly
\[\sum_{n=1}^\infty\frac{1}{(n+(d-1)/4+b_n)^{s-d+1}+a_n}\leq \zeta(s-d+1,(d+3)/4-a-b).\]
Since $\zeta(s,q)\to \infty$ as $s\to 1^+$ (and the terms in the series in \eqref{eq.Hurlimit} stay bounded) the limit \eqref{eq.Hurlimit} holds.
In fact $\zeta(s,q)$ has a simple pole of residue 1 at $s=1$ for all $q$ and so we obtain:
\begin{equation*}
\begin{split}
\lim_{s\to d^+} (s-d)A_{s,d} &= \lim_{s\to d^+}\bigg[\frac{\pi^{\frac{d+1}{2}}\Gamma(d+1)}{\Gamma(\frac{d+1}{2})}\bigg]^{s/d}\frac{4(s-d)}{\lambda_d\Gamma(d+1)}\sum_{i=1}^\infty(z_{i})^{d-s-2}\big(J_{d/2+1}(z_{i})\big)^{-2}\\
&=\lim_{s\to d^+}\frac{4\pi^{d/2}}{\Gamma(\frac{d}{2})}\frac{(s-d)}{2}\zeta(s-d+1,(d+3)/4)=\frac{2\pi^{d/2}}{\Gamma(\frac{d}{2})},
\end{split}
\end{equation*}
which completes the proof of Proposition~\ref{thm.2tight}.
\end{proof}

\begin{proof}[\textbf{Proof of Theorem \ref{thm.cohnire}}]
For a fixed $\rho$ and a Gaussian potential $f(|x-y|) = h(\langle x,y\rangle) = e^{-\alpha(2-2\langle x,y\rangle)}$, set
$$c:= (a_d \rho)^{1/d},\,\,\, \mathrm{where} \,\,\,
a_d:=\frac{(d+1)\pi^{\frac{d+1}{2}}}{\Gamma(1+\frac{d+1}{2})}=\frac{2\pi^{\frac{d+1}{2}}}{\Gamma(\frac{d+1}{2})}$$
is the area of $\mathbb{S}^d$, and let
\[f_N(|x-y|) = h_N(\langle x,y\rangle) := e^{-\alpha\frac{2-2\langle x,y\rangle}{(cN^{-1/d})^2}}.\]
Our approach is to first obtain estimates for the $h_N$-energy  of $N$-point configurations on the sphere $\mathbb{S}^d$.\

For each $N$, $h_N$ is absolutely monotone on $[-1,1)$, and so Theorem \ref{thm.hbound} holds.
We apply the same asymptotic argument as in the proof of Theorem \ref{thm.csd bound} to $h_N(t)$. In particular we sample along the subsequence
\[N_k := D(d,2k),\]
where the nodes $\alpha_i$ are given by the zeros of $P_k^{1,0}(t)$. Using the asymptotic formulas for $N_k$, the quadrature nodes $\alpha_i$, and the weights $\rho_i$, we obtain from Corollary \ref{thm.jacobi zeros} and \eqref{weight limit 3} that
\begin{equation}
\liminf_{N\to \infty} \frac{\mathcal{E}_{h_{N}}(\mathbb{S}^d,N)}{N}\geq
\frac{4}{\lambda_d\Gamma(d+1)}\sum_{i=1}^\infty\frac{z_i^{d-2}}{(J_{d/2+1}(z_i))^2}e^{-\alpha\big(\frac{z_i}{c(2/\Gamma(d+1))^{-1/d}}\big)^2}.
\label{GaussSphasymp}
\end{equation}

Let $0<\epsilon<1$.  Then there is a collection  $\{ C(a_\ell,r_\ell)\colon \ell=1, 2, \ldots, L\}$ of disjoint closed spherical caps on $\mathbb{S}^d$ such that $r_\ell<\epsilon$ and $$\sum_{\ell=1}^L\mathcal{H}_d( C(a_\ell,r_\ell))\ge (1-\epsilon)\mathcal{H}_d(\mathbb{S}^d).$$ Using \eqref{HdCap} and the fact that the caps are disjoint, it follows that there is a constant
$\kappa_1>0$, independent of $\epsilon$, such that
\begin{equation}\label{kappa1}
 (1+\kappa_1 \epsilon)^{-1}d\lambda_d\le \sum_{\ell=1}^L r_\ell^d \le d\lambda_d(1+\kappa_1\epsilon).
\end{equation}
Furthermore, there  are mappings $\phi_\ell:B^d(r_\ell)\to C(a_\ell,r_\ell)$, $\ell=1,2, \ldots, L$ and a constant $\kappa_2$ (again independent of $\epsilon$)  such that
\begin{equation}\label{phiineq}
|\phi_\ell(x)-\phi_\ell(y)|\ge (1- \kappa_2\epsilon)|x-y|, \qquad (x,y\in B^d(r_\ell)).
\end{equation}

Let $\mathcal{C}$ be a configuration in $\mathbb{R}^d$ with density $\rho$ and $f_\alpha$-energy $E_{f_\alpha}(\mathcal{C})$; i.e., the limits in Definitions~\ref{lowerfenergy} and \ref{lowerdensity} both exist.  Then, as $R\to \infty$,
we have for any $\alpha>0$,
\begin{equation}\label{cardCr}
   \#(\mathcal{C}\cap B^d(R))={\rho\,  \textup{vol}(B^d(R))}(1+o(1)),
\end{equation}
and
\begin{equation}
E_{f_\alpha}\left(\mathcal{C}\cap B^d(R)\right)\le [\rho\,  \textup{vol}(B^d(R))]\, E_{f_\alpha}(\mathcal{C})(1+o(1)).
\end{equation}
For $\ell=1, 2, \ldots, L$, let
$$
\omega_N^\ell:=\phi_{\ell}(cN^{-1/d}\mathcal{C}\cap B^d(r_\ell N^{1/d}/c)),
$$
and
\begin{equation}
\omega_N^{\mathcal{C}}:=\bigcup_{\ell=1}^L\omega_N^\ell.
\end{equation}
Observing that $\rho\,  \textup{vol}(B^d(1))d\lambda_d/c^d=1$, we see from \eqref{kappa1} and  \eqref{cardCr} that  as $N\to \infty$ the cardinality of $\omega_N^{\mathcal{C}}$ satisfies:
\begin{equation}\label{cardOmegaNC}
\#\omega_N^{\mathcal{C}}=\sum_{\ell=1}^L \#(\mathcal{C}\cap B^d(r_\ell N^{1/d}/c))\ge (1+\kappa_1 \epsilon)^{-1}N(1+o(1)).
\end{equation}
Let  $\delta$ denote the smallest  distance between any pair of distinct spherical caps $C(a_\ell,r_\ell)$ and $C(a_{\ell'},r_{\ell'})$.
The {\em cross energy} for $\ell\neq \ell'$ satisfies
\begin{equation}\label{crossen}
\frac{E_{h_N}(\omega_N^\ell,\omega_N^{\ell'})}{N}:=\frac{1}{N} \sum_{\substack{x\in \omega_N^\ell\\ y\in \omega_N^{\ell'}}}h_N(\langle x,y \rangle)\le N\exp\left(-\frac{\alpha\delta^2}{c^2}N^{2/d}\right)=o(1),
\end{equation}
as $N\to \infty$.

Using \eqref{phiineq} and defining $\alpha_\epsilon=\alpha(1- \kappa_2\epsilon)^2$, we obtain
\begin{equation*}
\begin{split}
E_{h_N}(\omega_N^\ell)&=\sum_{\substack{x, y\in \mathcal{C}\cap B^d(r_\ell N^{1/d}/c)\\ x\neq y}}\exp({-\alpha\frac{|\phi_\ell(cN^{-1/d} x)-
\phi_\ell(cN^{-1/d}y)|^2}{(cN^{-1/d})^2}})\\
&\le \sum_{\substack{x, y\in \mathcal{C}\cap B^d(r_\ell N^{1/d}/c)\\ x\neq y}}\exp(-\alpha(1- \kappa_2\epsilon)^2|x-y|^2)=E_{f_{\alpha_\epsilon}}(\mathcal{C}\cap B^d(r_\ell N^{1/d}/c))\\
&\le  [\rho\,  \textup{vol}(B^d(1))]\,(r_\ell^d N/c^d) E_{f_{\alpha_\epsilon}}(\mathcal{C})(1+o(1))= \frac{Nr_\ell^d}{d\lambda_d}E_{f_{\alpha_\epsilon}}(\mathcal{C})(1+o(1)).
\end{split}
\end{equation*}
Using the above estimate for $E_{h_N}(\omega_N^\ell)$ together with \eqref{cardOmegaNC} and \eqref{crossen} we obtain as $N\to\infty$,
\begin{equation}\label{alltogether}
\begin{split}
 \frac{\mathcal{E}_{h_{N}}(\mathbb{S}^d,\#\omega_N^{\mathcal{C}})}{\#\omega_N^{\mathcal{C}}}&\leq\frac{E_{h_N}(\omega_N^{\mathcal{C}})}{\#\omega_N^{\mathcal{C}}}\le (1+\kappa_1\epsilon)\sum_{\ell=1}^L \frac{E_{h_N}(\omega_N^\ell)}{N}(1+o(1))  +o(1)\\
&\le (1+\kappa_1\epsilon) \frac{1}{d\lambda_d}\left(\sum_{\ell=1}^L r_\ell^d\right)E_{f_{\alpha_\epsilon}}(\mathcal{C})(1+o(1)) +o(1)\\&\le (1+\kappa_1 \epsilon)^2E_{f_{\alpha_\epsilon}}(\mathcal{C})(1+o(1)) +o(1),
\end{split}
\end{equation}
Taking the limit inferior as $N\to\infty$ and then $\epsilon\to 0$ in \eqref{alltogether} and using \eqref{GaussSphasymp} completes the proof. \end{proof}

\section{Numerics}\label{section.ulbnum}

Translated into packing density and using Corollary \ref{thm.jacobi zeros}, inequality \eqref{eq.spheresep} provides an alternate proof of the following best-packing bound of Levenshtein \cite{LevPacking}:

\begin{corollary}
\[\Delta_d\leq \frac{z_{1}^d}{\Gamma(d/2+1)^24^d} =:L_d\]
\label{packingbound}
\end{corollary}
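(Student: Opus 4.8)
The plan is to feed the Bessel-zero asymptotics of the first quadrature node $\alpha_1(N)$ (Corollary~\ref{thm.jacobi zeros}) into the spherical separation bound \eqref{eq.spheresep}, and to combine this with the classical identification of $\Delta_d$ as the limiting density of optimal spherical codes.

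\emph{Step 1 (asymptotics of $\alpha_1(N)$).} Along the subsequence $N_k:=D(d,2k)$, the computation in the proof of Theorem~\ref{thm.csd bound} gives $\alpha_1(N_k)=\gamma^{1,0}_{k,1}$, the largest zero of $P^{1,0}_k=P^{(d/2,(d-2)/2)}_k$. Corollary~\ref{thm.jacobi zeros} (with $\alpha=d/2$) gives $k\cos^{-1}(\gamma^{1,0}_{k,1})\to z_1$, so $2-2\alpha_1(N_k)=2\bigl(1-\cos(z_1/k+o(1/k))\bigr)=z_1^2k^{-2}(1+o(1))$; together with $N_k=\tfrac{2}{\Gamma(d+1)}k^d(1+o(1))$ this yields $N_k\bigl(2-2\alpha_1(N_k)\bigr)^{d/2}\to 2z_1^d/\Gamma(d+1)$. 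Since $s\mapsto L(d,s)$ is increasing, $N\mapsto\alpha_1(N)$ is nondecreasing, and $N_{k+1}/N_k\to1$, a squeeze over $N\in[N_k,N_{k+1}]$ upgrades this to $\lim_{N\to\infty}N\bigl(2-2\alpha_1(N)\bigr)^{d/2}=2z_1^d/\Gamma(d+1)$.

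\emph{Step 2 (packing translation).} Let $d_N^\ast$ be the largest minimal Euclidean distance attained by an $N$-point subset of $\mathbb{S}^d$; then \eqref{eq.spheresep} says exactly $d_N^\ast\le\sqrt{2-2\alpha_1(N)}$. I also use the (standard) fact that
\[\Delta_d\ \le\ \liminf_{N\to\infty}\ \frac{\mathcal{H}_d(\mathbb{B}^d)\,N\,(d_N^\ast/2)^d}{\mathcal{H}_d(\mathbb{S}^d)}.\]
This follows by transplanting a near-optimal Euclidean packing onto the sphere: fix $\gamma<\Delta_d$ and $0<\delta<1$, cover all but an $\mathcal{H}_d$-fraction $\delta$ of $\mathbb{S}^d$ by finitely many disjoint closed caps $C(a_\ell,r_\ell)$ of small radius with positive mutual separation, and take near-isometries $\phi_\ell\colon B^d(r_\ell)\to C(a_\ell,r_\ell)$ with $|\phi_\ell(x)-\phi_\ell(y)|\ge(1-\delta)|x-y|$, exactly as in the proof of Theorem~\ref{thm.cohnire}; here $\mathcal{H}_d(C(a_\ell,r_\ell))=\mathcal{H}_d(\mathbb{B}^d)r_\ell^d(1+o(1))$ by \eqref{HdCap}, since $\mathcal{H}_d(\mathbb{S}^d)/(\lambda_d d)=\mathcal{H}_d(\mathbb{B}^d)$. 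Placing into each cap a rescaled localized copy of a fixed packing of density $>\gamma$ (balls of radius $\rho/2$, keeping only centres whose ball lies inside $B^d(r_\ell)$; the boundary collar is negligible as the scale tends to $0$) produces for each large $N$ an $N$-point spherical configuration of minimal distance $\ge(1-\delta)\rho$ with $\mathcal{H}_d(\mathbb{B}^d)(\rho/2)^d=\gamma(1-\delta)\mathcal{H}_d(\mathbb{S}^d)N^{-1}(1+o(1))$; letting $\gamma\uparrow\Delta_d$ and $\delta\downarrow0$ yields the displayed inequality.

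\emph{Step 3 (conclusion).} Combining Steps 1 and 2 and then substituting $\mathcal{H}_d(\mathbb{B}^d)=\pi^{d/2}/\Gamma(\tfrac d2+1)$, $\mathcal{H}_d(\mathbb{S}^d)=2\pi^{(d+1)/2}/\Gamma(\tfrac{d+1}{2})$, and the Legendre duplication identity $\Gamma(d+1)=2^d\pi^{-1/2}\Gamma(\tfrac{d+1}{2})\Gamma(\tfrac d2+1)$,
\[\Delta_d\ \le\ \frac{\mathcal{H}_d(\mathbb{B}^d)}{2^d\,\mathcal{H}_d(\mathbb{S}^d)}\cdot\frac{2z_1^d}{\Gamma(d+1)}\ =\ \frac{z_1^d}{4^d\,\Gamma(\tfrac d2+1)^2}\ =\ L_d,\]
as claimed. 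The only delicate ingredient is the second inequality in Step 2 — that good Euclidean packings transfer to $\mathbb{S}^d$ with no asymptotic loss of density — which is precisely the cap-covering construction already carried out in the proof of Theorem~\ref{thm.cohnire} and is where all the $\epsilon$-bookkeeping lives; the rest is the asymptotics of $\alpha_1(N)$ from Corollary~\ref{thm.jacobi zeros} plus the duplication formula. Alternatively, one may deduce Corollary~\ref{packingbound} at once by letting $s\to\infty$ in Theorem~\ref{thm.csd bound} and invoking \eqref{eq.Csinfty}, since $A_{s,d}^{1/s}\to z_1^{-1}\bigl[\pi^{(d+1)/2}\Gamma(d+1)/\Gamma(\tfrac{d+1}{2})\bigr]^{1/d}$.
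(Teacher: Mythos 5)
Your argument is exactly the route the paper indicates (and leaves as a one-line remark): feed the Bessel-zero asymptotics of the Levenshtein node $\alpha_1(N)$ from Corollary~\ref{thm.jacobi zeros} into the separation bound \eqref{eq.spheresep}, and translate the resulting asymptotic separation estimate into a bound on $\Delta_d$ via the standard cap-transplant comparison between Euclidean packings and spherical codes; your constant bookkeeping (including the duplication-formula simplification to $z_1^d/(4^d\Gamma(d/2+1)^2)$) is correct. The closing alternative via $s\to\infty$ in Theorem~\ref{thm.csd bound} and \eqref{eq.Csinfty} is likewise sound and is essentially the observation the paper itself makes in deriving \eqref{eq.Bd}.
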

As $s\to\infty$, the series in $A_{s,d}$ is dominated by the first term $z_{1}^{-s}$ and using the asymptotics of $C_{s,d}$ in (\ref{eq.Csinfty}), we see that
\begin{equation}
    \lim_{s\to\infty}\bigg[\frac{C_{s,d}}{A_{s,d}}\bigg]^{1/s} = \bigg[\frac{L_d}{\Delta_d}\bigg]^{1/d}=: B_d \geq 1
    \label{eq.Bd}
\end{equation}
The following table shows the values of $B_d$ in dimensions $d=1,2,3,8,$ and $24$ where $\Delta_d$ is known precisely. For $d=4,5,6,7$ where $\Delta_d$ is conjectured to be given by lattice packings, the table provides an upper bound for $B_d$.

\begin{table}[h!]
\begin{center}
\caption{Upper Bounds on $B_d$}

\begin{tabular}{|c|c|}
\hline
$d$ & $B_d$ \\
\hline
1 & 1\\
2 & 1.00589479\\
3 & 1.02703993\\
4 & 1.02440844\\
5 & 1.03861371\\
6 & 1.03461793\\
7 & 1.03156355\\
8 & 1.01742074\\
24 & 1.02403055\\

\hline
\end{tabular}
\label{tab.Bd}
\end{center}
\end{table}

\begin{figure}
\includegraphics[scale = 0.48]{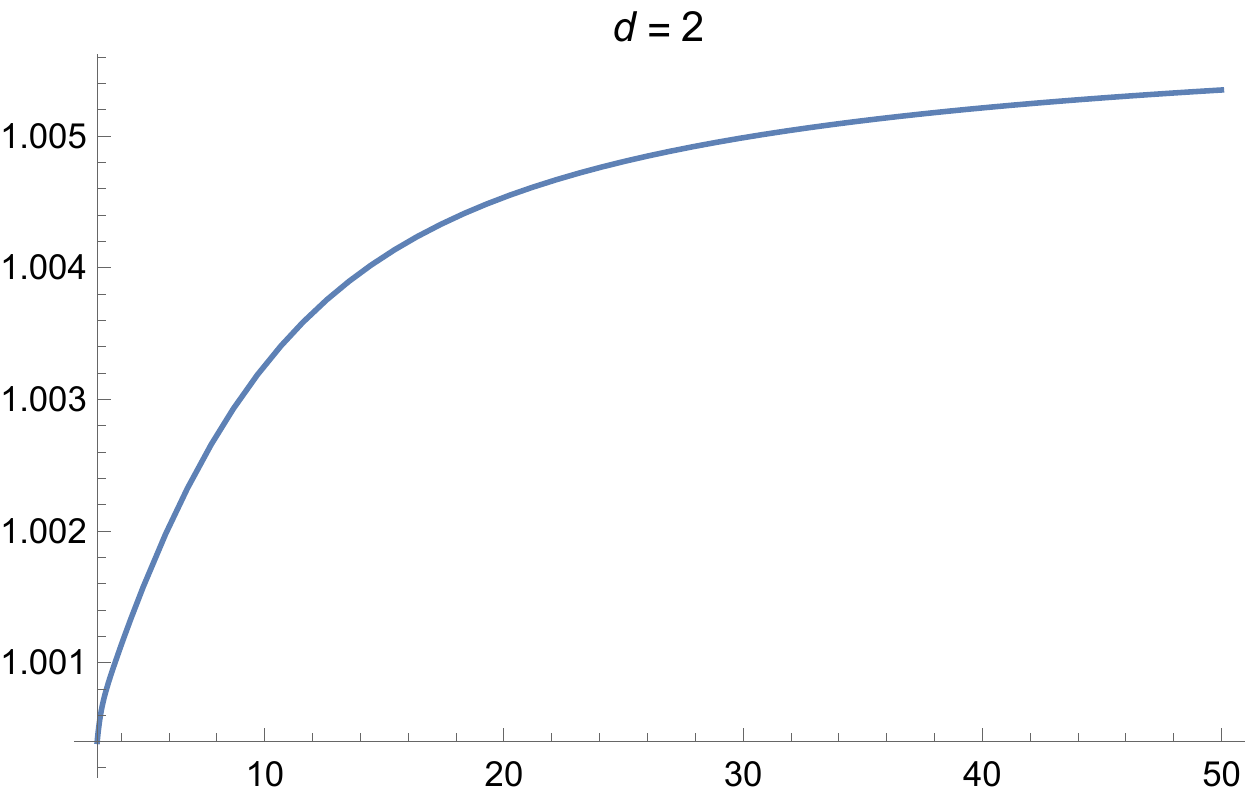} \includegraphics[scale = 0.48]{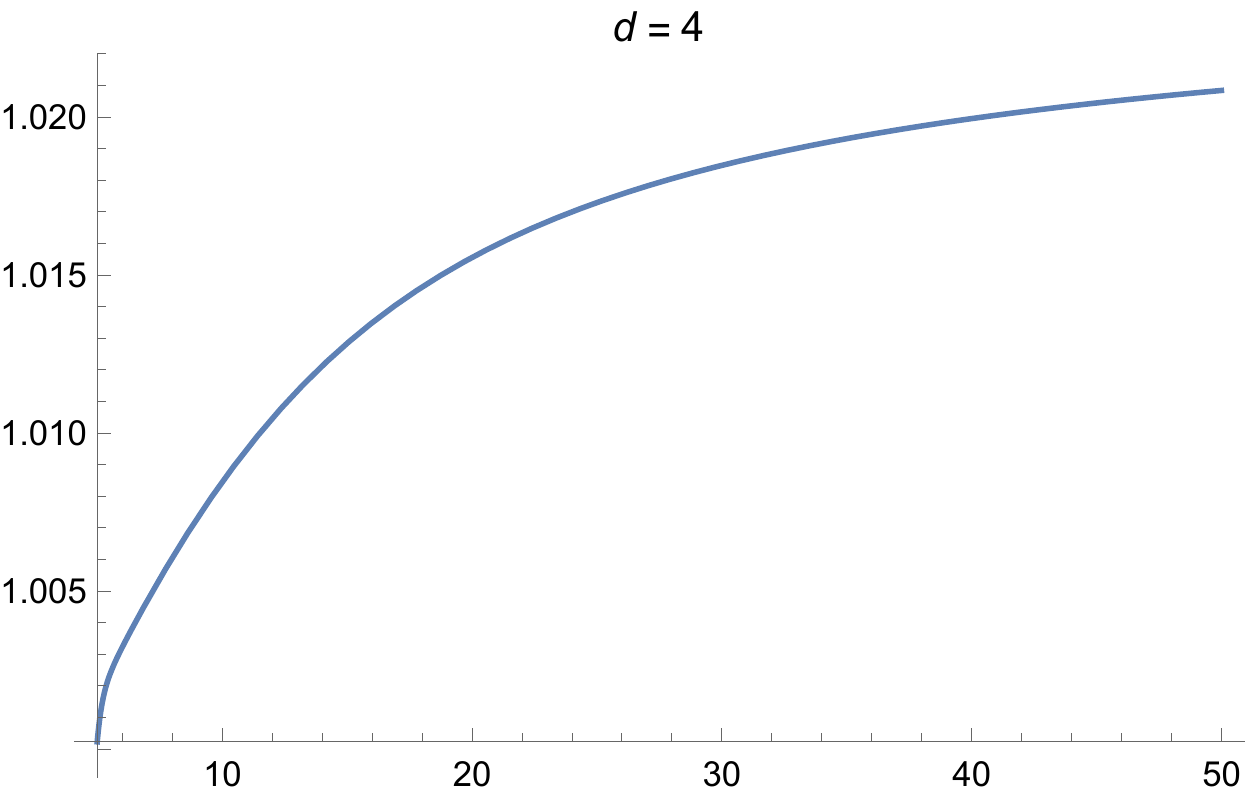}\\
\includegraphics[scale = 0.48]{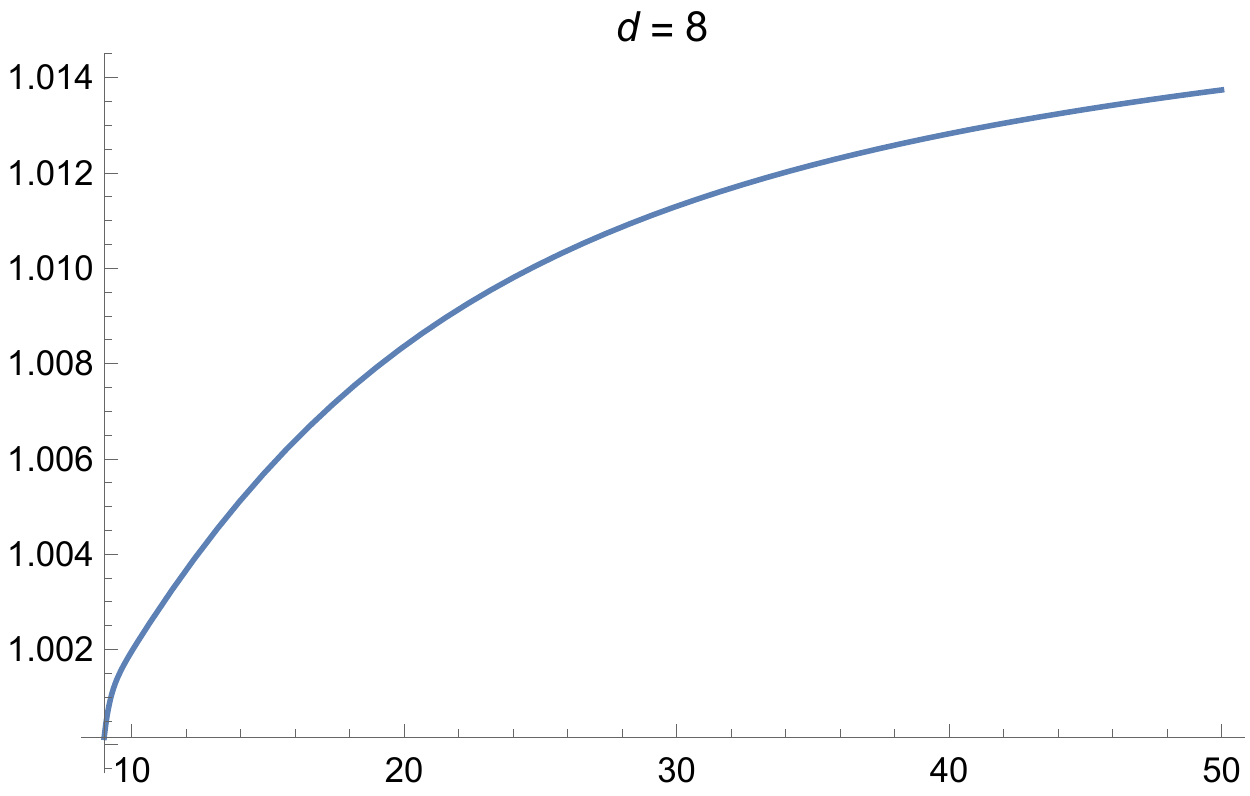} \includegraphics[scale = 0.48]{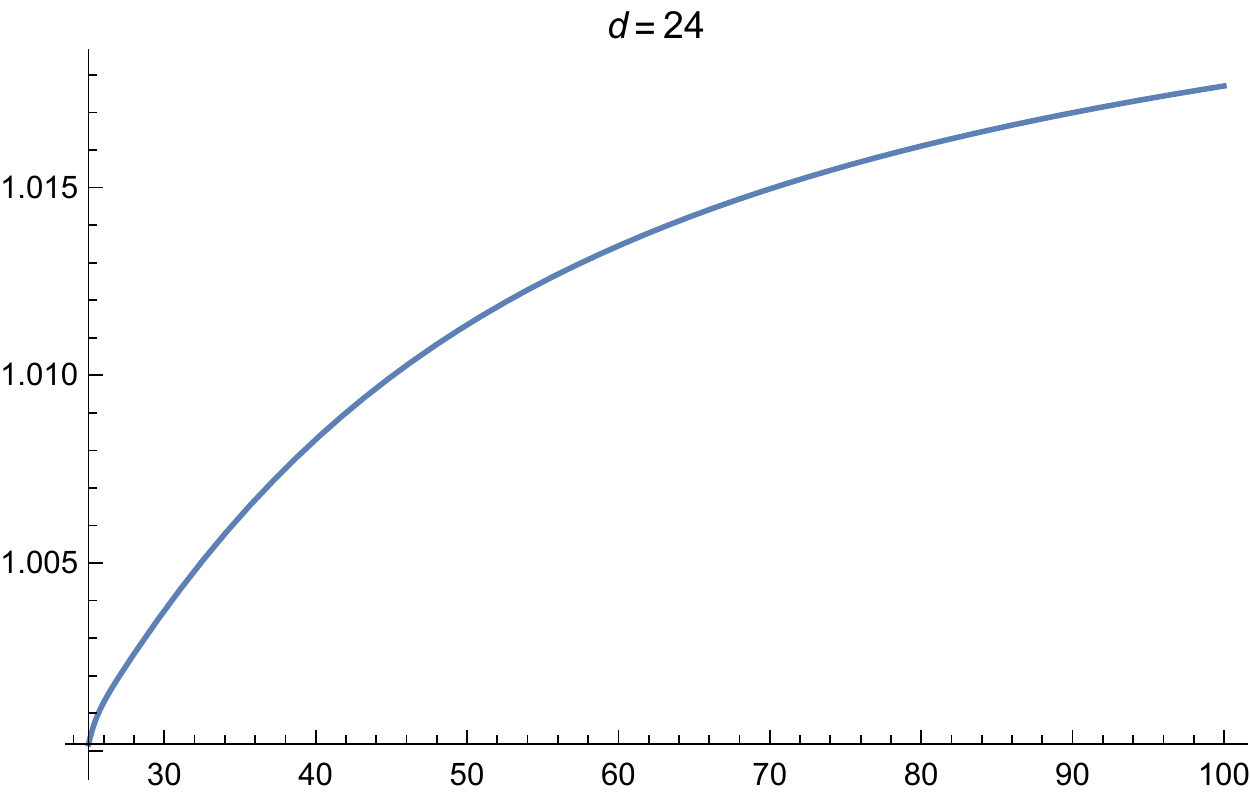}
\caption{Graphs of $f(s) = (\widetilde{C}_{s,d}/A_{s,d})^{1/s}$ for $d=2,4,8$ and $24$.}
\label{Asdfigure}
\end{figure}


For $d = 2,4,8,$ and $24$, where $\widetilde{C}_{s,d}$ is given in Conjecture \ref{Csdconj} we plot
\[f(s):=\bigg[\frac{\widetilde{C}_{s,d}}{A_{s,d}}\bigg]^{1/s}.\]
The Epstein zeta functions for the $D_4$, $E_8$, and Leech lattices are calculated using known formulas for the theta functions (see \cite[Ch. 4]{Conway})
\[ \Theta_\Lambda(z) = \sum_{x\in\Lambda}e^{i\pi z|x|^2},\ \ \ \ \ \ \ \ \ \ \ \mathrm{Im}\,z > 0.\]
Since these three lattices have vectors whose squared norms are even integers, we let $q=e^{i\pi z}$ and write
\[\Theta_{\Lambda_d}(z) = \sum_{m=1}^\infty N_d(m) q^{2m},\]
where $N_d(m)$ counts the number of vectors in $\Lambda_d$, $d=4,8,24$ of squared norm $2m$. Thus the Epstein zeta function
\[\zeta_{\Lambda_d}(s) = \sum_{m=1}^\infty \frac{N_d(m)}
{(2m)^{s/2}}.\]
For the $D_4$ lattice, a classical result from number theory gives
\[N_4(m) = 24\sum_{\substack{d|2m,\\ d \textup{ odd}}}d.\]
For the $E_8$ lattice, we have
\[N_8(m) =  240 \sigma_3(m),\]
where
\[\sigma_k(m) = \sum_{d|m}d^k \]
is the divisor function. Finally for the Leech lattice, it is known that
\[N_{24}(m) = \frac{65520}{691}  \left(\sigma_{11} (m) - \tau (m) \right), \]
where $\tau(m)$ is the Ramanujan tau function defined in \cite{Ramtau}.

Figure \ref{Asdfigure} plots $f(s)$ for $d=2,4,8$ and $24$. In these dimensions the graphs monotonically increase to the limit $B_d$ as $s\to \infty$ and decrease to 1 as $s\to d^+$, demonstrating Proposition \ref{thm.2tight}.

We remark that in high dimensions, it is likely that lattice packings are no longer optimal and less is known or conjectured regarding $C_{s,d}$. The Levenshtein packing bound from Corollary \ref{packingbound} yields
for large $d,$
\begin{equation}
\Delta_d\leq 2^{-0.5573d}
\label{asymppacking}
\end{equation}
and thus
\[B_d = O\bigg(\frac{2^{-0.5573}}{\Delta_d^{1/d}}\bigg).\]\

\noindent\emph{Acknowledgment.} The authors are grateful to J. S. Brauchart for his helpful suggestions.

 \bibliographystyle{plain}
 \bibliography{ULBasymptoticsbibliography}

\end{document}